\documentclass[11pt]{article}

\usepackage{amsfonts}
\usepackage{graphicx,subfig}
\usepackage{amsmath,amsthm}
\usepackage{wrapfig}
\usepackage{amssymb}

\usepackage{epsfig}
\usepackage{amsmath}
\usepackage{amssymb}
\usepackage{psfrag}
\usepackage[absolute]{textpos}
\usepackage{setspace}


\newtheorem{thm}{Theorem}[section]
\newtheorem{lem}{Lemma} [section]
\newtheorem{definition}{Definition}

\newtheorem{alg}{Algorithm}

\def\x{{\bf x}}
\def\y{{\bf y}}
\def\w{{\bf w}}

\def\C{{\mathcal{C}}}
\def\K{{\mathcal{K}}}
\def\H{H}

\def\G{{\mathcal{G}}}

\newcommand{\beq}{\begin{equation}}
\newcommand{\eeq}{\end{equation}}
\newcommand{\bea}{\begin{eqnarray}}
\newcommand{\eea}{\end{eqnarray}}

\newcommand{\stexp}{\mbox{$\mathbb{E}$}}    
\newcommand{\Prob}{\ensuremath{\mathbb{P}}}


\long\def\symbolfootnote[#1]#2{\begingroup%
\def\thefootnote{\fnsymbol{footnote}}\footnote[#1]{#2}\endgroup}

\setlength{\topmargin}{-0.6in} \setlength{\headheight}{0in}
\setlength{\headsep}{0in} \setlength{\textheight}{9 in}
\setlength{\oddsidemargin}{0in} \setlength{\evensidemargin}{0in}
\setlength{\textwidth}{6.3in} \addtolength{\headsep}{0.25in}
\marginparwidth 0.06 true in

\def\x{{\bf x}}
\def\y{{\bf y}}
\def\w{{\bf w}}

\newcommand{\karis}{{(\eps,d,r_0,n)}}

\newcommand{\eps}{\epsilon}

\newcommand{\h}{{\bf h}}
\newcommand{\z}{{\bf z}}



\begin{document}
%
\title{Reweighted LP Decoding for LDPC Codes}

\author{ Amin Khajehnejad$^*$ \, Alexandros G. Dimakis$^\dag$\, Babak Hassibi$^*$\\
  Benjamin Vigoda$^+$\, William Bradley$^+$\\ $^*$California Institute of Technology \\ $^\dag$University of Southern California \\$^+$Lyric Semiconductors Inc.
\thanks{This work was supported in part by the National Science Foundation under grants CCF-0729203, CNS-0932428 and CCF-1018927, by the Office of Naval Research under the MURI grant N00014-08-1-0747, by Caltech's Lee Center for Advanced Networking and by DARPA FA8750-07-C-0231.}
}


\maketitle

\begin{abstract}
We introduce a novel algorithm for decoding binary linear codes
by linear programming. We build on the LP decoding algorithm of
Feldman \textit{et al.}
and introduce a post-processing step that solves a second linear program
that reweights the objective function based on the outcome of the
original LP decoder output.
Our analysis shows that for some LDPC ensembles we can improve the
provable threshold guarantees
compared to standard LP decoding. We also show significant empirical
performance gains for the reweighted LP decoding algorithm with very
small additional computational complexity.
\end{abstract}

\section{Introduction}
\label{sec:Intro}

Linear programming (LP) decoding for binary linear codes was
introduced by
Feldman, Karger and Wainwright~\cite{Feldman Wainwright}.
The method is based on solving a linear-programming relaxation
of the integer program corresponding to the maximum likelihood (ML)
decoding problem.
LP decoding is connected to message-passing decoding~
\cite{FelKarWai02,Wainwright02aller},
and graph covers~\cite{KoeVon03,VonKoe06b} and has received
substantial recent attention
(see e.g. \cite{VonKoe06b}, and \cite{TagSie06}). 

As with the work described here, a related line of work has studied
various improvements to either standard iterative
decoding~\cite{Fossorier01,PishroFekri} or to LP decoding via
nonlinear extensions~\cite{YanFelWan06} or loop
corrections~\cite{CheChe06}.

The practical performance of LP decoding is roughly comparable to min-
sum decoding
and slightly inferior to sum-product decoding.
In contrast to message-passing decoding, however, the LP decoder
either concedes
failure on a problem, or returns a codeword along with a guarantee
that it is the ML codeword, thereby eliminating any undetected
decoding errors.

The main idea of this paper is to add a second LP as a post-processing
step when
original LP decoding fails and outputs a fractional pseudocodeword. We
use the difference
between the input channel likelihood and the pseudocodeword coordinate
to find a measure of disagreement or unreliability for each bit. We
subsequently use this unreliability to
bias the objective function and re-run the LP with the reweighted
objective function.  The reweighting increases the cost of changing
reliable bits and decreases the cost for unreliable bits.
We present an analysis that the provable BSC recovery
thresholds improve for certain families of LDPC codes. We stress that
the actual thresholds, even for the original LP decoding algorithm,
remain unknown.
Our analysis only establishes that the obtainable lower bounds on the
fraction of recoverable errors
are improved compared to the corresponding bounds for LP decoding. It
is possible, however, that
this is just an artifact of the lower bound techniques and that the
true threshold is identical for both algorithms. In any case, the
empirical performance gains we observe in our preliminary experimental
analysis seem quite substantial.

A central idea in our analysis is a notion of robustness
to changes in the BSC bit-flipping probability. This concept was
inspired by a similar reweighted iterative $\ell_1$ minimization
idea for compressive sensing~\cite{CWB07,Iterative l1 ISIT}.
We note that the reweighting idea of this paper involves changing the
objective function of the LP from the reweighted max-
product
algorithm~\cite{WaiJaaWil05b}.

\section{Basic Definitions}
\label{sec:Definitions} A vector $\x$ in $\mathbb{R}^n$ is called
$k$-sparse if it has exactly $k$ nonzero entries. The support set of
a sparse vector $\x$ is the index set of its nonzero entries. If
$\x$ is not sparse, the $k$-support set of $\x$ is defined as the
index set of the maximum $k$ entries of $\x$ in magnitude.  We use
$\|\x\|_p$ to denote the $\ell_p$ norm of a vector $\x$ for $p\geq 0$.
in particular $\|\x\|_0$ is defined to be the number of nonzero
entries in $\x$. For a set $S$, cardinality of $S$ is denoted by
$|S|$ and if $S\subset\{1,2,\cdots,n\}$, then $\x_S$ is the
sub-vector formed by those entries of $\x$ indexed in $S$. Also the
complement set of $S$ is denoted by $S^c$. The rate of a linear binary code
$\C$ is denoted by $R$, and the corresponding parity check matrix is
$H\in \mathbb{F}^{m\times n}$, where $n$ is the length of each
codeword and $m=Rn$. The factor graph corresponding to $\C$ is denoted
by $\mathcal{G} = (X_v,X_c,\mathcal{E})$, where $X_v$ and $X_c$ are
the sets of variable nodes and check nodes respectively, and
$\mathcal{E}$ is the set of edges. For regular graphs, $d_v$ and $d_c$ denote the degree of variable and check nodes
respectively. The girth of a graph $\mathcal{G}$, denoted by $\text{girth}(\mathcal{G})$, is defined to be
the size of the smallest cycle in $\mathcal{G}$.

\section{Background}
\label{sec:background}

Suppose that $C$ is a memoryless channel with binary input and an
output alphabet $\mathcal{Y}$, defined by the
transition probabilities $P_{Y|X}(y|x)$. For a received symbol $y$,
the likelihood ratio is defined as
$\log(\frac{P_{Y|X}(y|x=0)}{P_{Y|X}(y|x=1)})$, where $x$ is the
transmitted symbol. If a codeword $\x^{(c)}$ of length $n$ from
the linear code $\C$ is transmitted through the channel, and an output
vector $\x^{(r)}$ is received, a maximum likelihood decoder can be
used to estimate the transmitted codeword by finding the \emph{most
likely} transmitted input codeword. Let $\gamma_i$ be the likelihood
ratio assigned to the $i^\text{th}$ received bit $\x^{(r)}_i$, and $\gamma$
be the likelihood vector  $\gamma = (\gamma_1,\cdots,\gamma_n)^T$.
The ML decoder can be formalized as follows~\cite{feldthesis}

\bea \nonumber \text{ML decoder:}&&\text{minimize} ~\gamma^T \x \\
&&\text{subject
    to}~\x\in\text{conv}(\C), \label{ML decoder}\eea

\noindent where $\text{conv}(\C)$ is the convex hull of all the
codewords of $\C$ in $\mathbb{R}^n$. The linear program (\ref{ML
  decoder}) solves the ML decoding problem by the virtue of the fact that
the objective $\gamma^T\x$ is minimized by a corner point (or vertex)
of $\text{conv}(\C)$, which is necessarily a codeword (In fact, vertices of
$\text{conv}(\C)$ are all the codewords of $\C$). In a linear program,
the polytope over which the optimization is performed is described by
linear inequalities describing the facets of the polytope. Since decoding for general linear codes is NP hard, it is unlikely that $\text{Conv}(\C)$ can be efficiently described. Feldman \textit{et al.} introduced a relaxation of (\ref{ML decoder}) by  replacing the polytope $\text{conv}(\C)$ with a new
polytope $\mathcal{P}$ that has much fewer facets, contains $\text{conv}(\C)$ and retains the codewords
of $\C$ as its vertices~\cite{feldthesis}. One way to construct $\mathcal{P}$ is
the following. If the parity check matrix of $\C$ is the $m\times n$
matrix $H$ and if $\h_j^T$ is the $j$-th row of $H$, then

\beq \mathcal{P} = \cap_{1\leq j\leq m} \text{conv}(\C_j), \eeq

\noindent where $\C_j = \{\x\in \mathbb{F}^n~|~ \h_j^T\x =  0
~\text{mod}~ 2 \}$.  As mentioned earlier, with this construction, all
codewords of $\C$ are vertices of
$\mathcal{P}$. However, $\mathcal{P}$ has some additional
vertices with fractional entries in $[0,1]^n$. A vertex of the polytope
$\mathcal{P}$ is called a \emph{pseudo-codeword}. Moreover, if a
pseudo-codeword is integral, i.e., if it has 0 or 1 entries, then it
is definitely a codeword. The LP
relaxation of (\ref{ML decoder}) can thus be written as:

\bea \nonumber \text{LP decoder:}&&\text{minimize} ~\gamma^T \x \\
&&\text{subject
    to}~\x\in \mathcal{P}. \label{LP decoder}\eea

The number of facets of $\mathcal{P}$ is exponential in the maximum
weight of a row of $H$. Therefore, for LDPC codes with a small (often constant) row density, $\mathcal{P}$ has a
polynomial number of facets, and it is possible to solve (\ref{LP decoder}) in polynomial time.

For binary symmetric channels, (\ref{LP decoder}) has another useful
interpretation. In this case, rather than minimize $\gamma^T\x$ it
turns out that one can alternatively minimize the Hamming distance
between the output of the channel $\x^{(r)}$ and the individual
codewords $\x\in\C$. Using the fact that the LP relaxation with
$\mathcal{P}$ relaxes the entries of $\x$ from $x_i\in\{0,1\}$ to
$x_i\in [0,1]$, we may replace the Hamming distance with the $\ell_1$
distance $\|\x-\x^{(r)}\|_1$. This implies that the decoder (\ref{LP
  decoder}) is equivalent to

\bea \nonumber \text{BSC-LP decoder:}&&\text{minimize}~ \|\x-\x^{(r)}\|_1 \\
&&\text{subject
    to}~\x\in\mathcal{P}. \label{BSC-LP decoder}\eea
\noindent The above formulation can be interpreted as follows. For a
received output binary vector $\x^{(r)}$, the solution to the LP
decoder is basically the closest (in the $\ell_1$ distance sense)
pseudo-codeword to $\x^{(r)}$.

Linear programming decoding was first introduced by Feldman \textit{et al.} \cite{feldthesis,Feldman Wainwright}. Subsequently \cite{Feldman constant fraction of error} it was shown that if the parity check matrix
is chosen to be the adjacency matrix of a high-quality expander, LP
decoding can correct a constant fraction of errors. A fundamental lemma in \cite{Feldman Wainwright}
and used in the results therein, is that the LP polytope
$\mathcal{P}$ is the same polytope from the view point of every
codeword, and therefore for the analysis of LP decoding, it can be
assumed without loss of generality that the transmitted codeword is
the all zero codeword. The theoretical results of \cite{Feldman
constant fraction of error} were based on a dual witness argument,
i.e. a feasible set of variables that set the dual of LP equal to
zero. However, the bounds on success threshold of LP decoding achieved by this technique is considerably smaller than the empirical recovery threshold of LP decoder in practice. A later analysis of LP decoding by Daskalakis \textit{et al.} \cite{Alex Decoding} improved upon those bounds for random expander
codes, through employing a different dual witness argument, and
considering a \emph{weak} notion of LP success rather than the
\emph{strong} notion of \cite{Feldman constant fraction of error}. A
strong threshold means that \emph{every} set of errors of up to a
certain size can be corrected, whereas a weak threshold implies that
\emph{almost all} error sets of a certain size are recoverable.
Note that there is a gap of about one order of magnitude between the error-correcting thresholds of \cite{Alex Decoding}  and the ones observed in practice.

The arguments of \cite{Feldman constant fraction of error} and
\cite{Alex Decoding} are based on the existence of dual certificates that guarantee the
success of the LP decoder and require codes that are based
on bipartite expander graphs. A more recent work of Arora \textit{et
al.}~uses a quite different certificate based on the primal LP
problem \cite{ADS}. This approach results in fairly easier
computations and significantly better thresholds for LP decoding.
However, the underlying codes discussed in \cite{ADS} are based on factor graphs
with a large girth (at least doubly logarithmic in the number of variables), rather than unbalanced expanders considered in previous arguments. Note that similar to \cite{Alex Decoding}, the bounds of
\cite{ADS} are weak bounds, certifying that for a random set of errors up to a fraction of bits, LP decoding succeeds with high probability. The largest such fraction is called the weak recovery threshold.

A somewhat related problem to the LP decoding of linear codes is
the compressed sensing (CS) problem. In CS an
unknown real vector $\x$ of size $n$ is to be recovered from a set of
$m$ linear measurement, represented by $\y = A\x$, where $A\in \mathbb{R}^{m\times n}$, and $m << n$. This is in general infeasible, since the measurement matrix $A$ is under-determined and
the resulting system of equations is ill-posed, i.e., it can have infinitely many solutions. However,
imposing a sparsity condition on $\x$ can make the solution
unique. The unique sparse solution can be found by exhaustive search
for instance, which is formulated by the following minimization
program: \bea \nonumber &&\text{minimize}~ \|\x\|_0 \\ \label{eq:l0
min}&&\text{subject to}~ A\x = \y. \eea

\noindent Since (\ref{eq:l0 min}) is NP-hard, one possible
approximation is relaxing the $\ell_0$ norm of $x$ to the
closet convex norm $\|\x\|_1$, which results in the following $\ell_1$ minimization program:

\bea  &&\text{minimize}~ \|\x\|_1 \\ \label{eq:l1 min}&&\text{subject
to}~ A\x = \y.\eea

\noindent (\ref{eq:l1 min}) is a linear program, which can in general be solved in polynomial time. There has been substantial theoretical work on this linear programming relaxation, see \textit{e.g.} \cite{Donoho,Donoho1,Null Space1,Null Space2,Weiyu robustness}

Recently, systematic connections between the problems of channel coding LP and CS $\ell_1$ relaxation has been found~\cite{Alex Vontobel,Alex_Vontobel2}. In this paper, we build on those connections to improve LP decoding, and further extend the ideas of robustness and reweighted $\ell_1$ minimization in compressed sensing to channel coding LP.

\section{Extended Certificate and Robustness of LP decoder}
\label{sec:robutsness} The success of LP decoder is often certified
by the existence of a \emph{dual witness}  \cite{Feldman constant
fraction of error,Alex Decoding}. Similarly, for $\ell_1$
minimization in the context of CS, a dual witness
certificate can guarantee that the recovery of sparse signals is
successful \cite{Candes RIP}.  However, it has proven more promising to express the success condition of $\ell_1$ minimization in
terms of  the properties of the null space of the measurement
matrix~\cite{Null Space1,Null Space2,Null Space3}. The condition is
called \emph{null space property}, through which it is possible to characterize one class of ``good'' measurement matrices for CS, namely matrices that are congruent with $\ell_1$ minimization decoding. The advantage of the null space interpretation, apart from the fact that it results in sharper analytical bounds, is that with proper
parametrization, it can also be used to evaluate the performance of
$\ell_1$ minimization in the presence of noise. This is known as
the \emph{robustness} of $\ell_1$ minimization.  A consequence of
the robustness property is that when $\ell_1$ minimization fails to
recover a sparse signal, it often gives a decent approximation to it
\cite{Iterative l1 ISIT}. To the best of our knowledge, a similar
certificate has not been introduced in the context of channel coding
linear programming. In other words, when LP decoding fails to return an integral solution, it is not known how far in the proximity of the actual codeword it lies. We provide an approximate solution to this question in this section, using the following strategy. We introduce a property called fundamental cone property for an
arbitrary code $\C$, and show that for binary symmetric channels,
this is related to the robustness of the solution of the LP decoder.
The robustness of LP decoding has two consequences. First, it
implies that the linear program is tolerant to a limited mismatch in the available formulation. Second, it can be used to develop iterative schemes that improve the performance of the decoder.  We
will discuss these issues in proceeding sections.  We begin by defining the
fundamental cone of a code from \cite{Alex Vontobel}.

\begin{definition}
Let $H$ be a parity check matrix. Define $\mathcal{J}$ and
$\mathcal{I}$ to be the set of rows and columns of $H$. Also, for
each $j\in \mathcal{J}$, define $\mathcal{I}_j = \{i \in
\mathcal{I}~|~ H(j,i) = 0\}$. The fundamental cone, $\K(H)$, of $H$ is the set of all vectors $\omega=(\omega_1,\omega_2,\dots,\omega_n)^T$ that satisfy

\begin{align} & \omega_i\geq 0, ~~~~~~~~~~~~ \forall 1\leq i\leq n, \\
& \omega_i \leq \sum_{i'\in \mathcal{I}_j \setminus i}\omega_{i'},
~~\forall j\in \mathcal{J}~\forall i\in \mathcal{I}_j. \end{align}
\end{definition}

\noindent $\K(\H)$ is the smallest cone in $\mathbb{R}^n$ that
encompasses the polytope $\mathcal{P}$. If a vector lies on an edge
of $\K$, it is called a \emph{minimal pseudo-codeword}. For simplicity, in the sequel, we use $\mathcal{K}$ instead of $\mathcal{K}(H)$ whenever there is no ambiguity.

\begin{definition}
\label{def:FCP} Let $S \subset \{1,2,\cdots,n\}$ and $C\geq1$ be
fixed. A code $\mathcal{C}$ with parity check matrix $\H$ is said to
have the fundamental cone property $\text{FCP}(S,C)$, if for every nonzero vector $\omega
\in \mathcal{K}(\H)$ the following holds:

\beq C\|\omega_S\|_1 < \|\omega_{ S^c}\|_1, \eeq

if for  every  index set $S$ of size $k$,  $\C$ has the
$\text{FCP}(S,C)$, then we say that $\C$ has the fundamental cone
property $\text{FCP}(k,C)$.
\end{definition}

In the next lemma we show how the fundamental cone property can be
used to evaluate the performance of an LP decoder, even when it
fails to recover the true codeword. The key assumption is that the
channel is a bit flipping channel (\textit{e.g.} BSC).

\begin{lem}
\label{lem:LP robustness} Let $\C$ be a code that has the
$\text{FCP}(S,C)$ for some index set $S$ and some $C\geq1$. Suppose
that a codeword $\x^{(c)}$ from $\C$ is transmitted through a bit flipping
channel, and the received codeword is $\x^{(r)}$. If the
pseudocodeword $\x^{(p)}$ is the output of LP decoder for the
received codeword $\x^{(r)}$, then the following holds:

\beq \|\x^{(p)}-\x^{(c)}\|_1 <
2\frac{C+1}{C-1}\|(\x^{(r)}-\x^{(c)})_{S^c}\|_1.
\label{eq:robustness}\eeq
\end{lem}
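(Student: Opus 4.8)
The plan is to mirror the null-space-property robustness arguments of compressed sensing, with the fundamental cone $\K$ playing the role that the null space of the measurement matrix plays there. First I would use the symmetry of $\mathcal{P}$ recalled above to reduce to the case $\x^{(c)} = \mathbf{0}$: for a bit-flipping channel, translating (XOR-ing) everything by the transmitted codeword maps $\mathcal{P}$ to itself and turns the received word into the binary error pattern $\e := \x^{(r)} - \x^{(c)}$, while the $\el$ objective $\|\x - \x^{(r)}\|_1$ of the BSC-LP decoder is carried along consistently. After this reduction the target inequality reads $\|\x^{(p)}\|_1 < 2\frac{C+1}{C-1}\|\e_{S^c}\|_1$.

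Next I would isolate the two ingredients that drive the estimate. The first is optimality of the LP: since $\mathbf{0} = \x^{(c)} \in \text{conv}(\C) \subset \mathcal{P}$ is feasible and $\x^{(p)}$ minimizes $\|\x - \e\|_1$ over $\mathcal{P}$, I get $\|\x^{(p)} - \e\|_1 \leq \|\mathbf{0} - \e\|_1 = \|\e\|_1$. The second is the cone property: because $\K$ is the smallest cone encompassing $\mathcal{P}$ and $\x^{(p)} \in \mathcal{P}$, the vector $\omega := \x^{(p)}$ lies in $\K$, so (in the nontrivial failure case $\x^{(p)} \neq \mathbf{0}$) the $\text{FCP}(S,C)$ gives $C\|(\x^{(p)})_S\|_1 < \|(\x^{(p)})_{S^c}\|_1$.

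The heart of the argument is chaining these together. I would split the optimality inequality across $S$ and $S^c$ and apply the reverse triangle inequality coordinatewise, using $|a-b| \geq |b|-|a|$ on $S$ and $|a-b| \geq |a|-|b|$ on $S^c$; the common terms $\|\e_S\|_1$ cancel, leaving $2\|\e_{S^c}\|_1 \geq \|(\x^{(p)})_{S^c}\|_1 - \|(\x^{(p)})_S\|_1$. Substituting the FCP bound $\|(\x^{(p)})_S\|_1 < \frac{1}{C}\|(\x^{(p)})_{S^c}\|_1$ yields $\|(\x^{(p)})_{S^c}\|_1 < \frac{2C}{C-1}\|\e_{S^c}\|_1$, and one further use of the FCP to write $\|\x^{(p)}\|_1 < \frac{C+1}{C}\|(\x^{(p)})_{S^c}\|_1$ produces exactly the constant $2\frac{C+1}{C-1}$. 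Every link is a non-strict triangle inequality except the FCP, which is strict, so the final bound is strict as stated; note also this is where $C>1$ is needed for the constant to be finite and positive.

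I expect the main obstacle to be conceptual rather than computational: cleanly justifying the two ``soft'' facts on which the telescoping rests, namely that $\x^{(p)}$ itself lies in the cone $\K$ (precisely the point at which ``smallest cone encompassing $\mathcal{P}$'' is invoked), and that the all-zero reduction is legitimate for the Hamming/$\el$ objective of a bit-flipping channel. The degenerate case $\x^{(p)} = \x^{(c)}$ (exact recovery) must be set aside, since then the left side vanishes and the FCP, stated for nonzero $\omega$, does not apply; but this is exactly the case where the LP decoder succeeds, so excluding it is harmless.
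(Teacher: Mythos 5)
Your proposal is correct and follows essentially the same route as the paper's proof: reduce to the all-zero codeword by the Feldman et al.\ symmetry of $\mathcal{P}$, use LP optimality to get $\|\x^{(p)}-\x^{(r)}\|_1 \leq \|\x^{(r)}\|_1$, split over $S$ and $S^c$ with the triangle inequality to obtain $2\|\x^{(r)}_{S^c}\|_1 \geq \|\x^{(p)}_{S^c}\|_1 - \|\x^{(p)}_S\|_1$, and then invoke $\text{FCP}(S,C)$ on $\x^{(p)}\in\K$ to convert this into a bound on $\|\x^{(p)}\|_1$ (you apply the FCP in two small steps where the paper does it in one algebraic identity, but the content is identical). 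Your explicit remarks about the degenerate case $\x^{(p)}=\x^{(c)}$ and the need for $C>1$ are sensible and consistent with what the paper leaves implicit.
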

\begin{proof}
Without loss of generality, we may assume that the all zero codeword
was transmitted, i.e. $\x^{(c)}=0$. We have
\begin{align}
\nonumber\|\x^{(r)}_S\|_1 + \|\x^{(r)}_{S^c}\|_1  &= \|\x^{(r)}\|_1 \\
& \stackrel{(a)}{\geq} \|\x^{(p)}-\x^{(r)}\|_1 \nonumber\\ 
&= \|(\x^{(p)}-\x^{(r)})_S\|_1+\|(\x^{(p)}-\x^{(r)})_{S^c}\|_1\nonumber\\
& \stackrel{(b)}{\geq} \|\x^{(r)}_S\|_1 -\|\x^{(p)}_S\|_1  + \|\x^{(p)}_{S^c}\|_1 -
\|\x^{(r)}_{S^c}\|_1. \label{eq:b}
\end{align}
\noindent (a) is true because from (\ref{BSC-LP decoder}),
$\|\x^{(p)}-\x^{(r)}\|_1 \leq \|\x^{(c)}-\x^{(r)}\|_1$. Also
(b) holds by the triangular inequality. Note that
$\x^{(p)}\in\mathcal{K}(H)$, so by definition, $C\|\x^{(p)}_S\|_1 <
\|\x^{(p)}_{S^c}\|_1$. This implies that

\beq \|\x^{(p)}_{S^c}\|_1 - \|\x^{(p)}_S\|_1 >
\frac{C-1}{C+1}\|\x^{(p)}\|_1. \eeq

\noindent Applying this to the left hand side of (\ref{eq:b}) we
obtain

\begin{equation}
2\frac{C+1}{C-1}\|\x^{(r)}_{S^c}\|_1 > \|\x^{(p)}\|_1,
\end{equation}
\noindent Which is the desired result.

\end{proof}
\noindent An asymptotic case of Lemma \ref{lem:LP robustness} for
$C\rightarrow 1$ is in fact equivalent to the LP success condition.
Namely, let $S$ be the index set of the flipped bits in the
transmitted codeword, i.e. the set of bits that differ in $\x^{(r)}$
and $\x^{(c)}$. If  $\text{FCP}(S,C)$ holds for some $C>1$, then Lemma \ref{lem:LP robustness} implies that LP
decoding can successfully recover the original codeword. Now let us
say that the set of errors (flipped bits) is slightly larger than
$S$, and does include $S$. Then the vector $(\x^{(r)}-\x^{(c)})_{S^c}$
has a few (but not too many) nonzero entries. Therefore, even if the
LP decoder output $\x^{(p)}$ is not equal to the actual codeword, it
is still possible to obtain an upper bound on its $\ell_1$ distance to
the unknown codeword. We recognize this as the
robustness of LP decoder, and characterize it by
$\text{FCP}(S,C)$, for $C>1$. Furthermore, two notions of robustness can be considered. Strong robustness means that for \emph{every} set $S$ of up to some cardinality $k$, the $\text{FCP}$ condition holds, namely $\text{FCP}(k,S)$. Weak robustness on the other hand deals with almost all sets $S$ of up to
a certain size.  In the next section we present a thorough analysis of LP
robustness for two categories of codes: expander codes and codes
with $\Omega(\log\log n)$ girth. For these two classes of codes, rigorous
analysis has been done on the performance of LP decoders in
\cite{Feldman constant fraction of error,Alex Decoding} and
\cite{ADS}, respectively. We build on the existing arguments to incorporate the robustness condition and analyze the fundamental cone property. Afterwards, we discuss the implications of LP robustness.

\section{Analysis of LP Robustness} \label{sec:p-qmatchings}
In most cases, if there exists a certificate for the success of LP decoder, it can be
often extended to guarantee that the LP decoder is robust, namely
that the FCP condition is satisfied for some $C > 1$. By
carefully re-examining the analysis of LP decoder, one might be able
to do such a generalization. This is the main focus of this
section. We consider three major methods that exist in the literature for analyzing the performance of LP decoders. The first one is due to Feldman et. al \cite{Feldman
constant fraction of error}, and is based on using a dual witness type of
argument to certify the success of LP decoder for expander
graphs. The second one is that of Daskalakis \textit{et al.} \cite{Alex Decoding},
which again considers linear programming decoding in expander codes.
Specifically, \cite{Alex Decoding} analyzes the dual of LP and finds a simple combinatorial condition for the dual value to be zero (implying that the LP decoder is successful). The condition is
basically the existence of a so-called \emph{hyperflow} from the set of
flipped bits to unflipped bits. The existence of a valid hyperflow can be secured by the
presence of so-called $(p,q)$-matchings.  It then follows from a detailed series of probabilistic calculations     that $(p,q)$-matchings of interest exist for certain expander codes. The main difference between this analysis and
that of Feldman \textit{et al.} is the probabilistic nature of the arguments in \cite{Alex Decoding},  which account for weak recovery thresholds.

A third analysis of the LP decoder was done by Arora \textit{et
al.}, \cite{ADS}, which is based on factor graphs with a doubly logarithmic girth. Unlike previous dual feasibility arguments, the authors in \cite{ADS} introduce a certificate in the primal domain, which is of the following form: If
in the primal LP problem, the value of the objective function for
the original codeword is smaller than its value for all vectors
within a local deviation from the original codeword, then LP decoder
succeeds. Local deviations are defined by weighted minimal local
trees whose induced subgraphs are cycle-free.

\subsection{Strong LP Robustness for Expander Codes}
Strong thresholds of LP decoding for expander codes are derived in
\cite{Feldman constant fraction of error}. To show that the
transmitted codeword is the LP optimal obtained by (\ref{LP decoder}) when a subset of the bits are flipped, a set of
feasible dual variables are found that satisfy the following
conditions. Suppose the factor graph of $\C$ is denoted by
$\mathcal{G} = (X_v,X_c,\mathcal{E})$. We may also assume without
loss of generality that the all zero codeword was transmitted. A
set of feasible dual variables is defined as follows (see
\cite{Feldman constant fraction of error} for more details)

\begin{definition}
\label{def:Dual}
For an error set $S$, a set of feasible dual
variables is a labeling of the edges of the factor graph $\mathcal{G}$,
say $\{\tau_{ij}~|~v_i \in X_v~c_j\in X_c\}$, where the following two
conditions are satisfied:
\begin{itemize}
\item[i)] For every check node $c_j\in X_c$ and every two disjoint
neighbors of $c_j$, say $v_i,v_{i'}\in N(j)$, we have
$\tau_{ij}+\tau_{i'j} \geq 0$.
\item[ii)] For every variable node $v_i \in X_v$, we have $\sum_{c_j\in N(v_i)} \tau_{ij} \leq
\gamma_i$.
\end{itemize}
\end{definition}

We show that a generalized set of dual feasible variables can be
used to derive LP robustness. To this end, we show that the
existence of a set of feasible dual variables implies the FCP
condition.  The following lemma is proved in Appendix \ref{sec:proof of feasibility lem}.
\begin{lem}
\label{lem:feasible-->LP} Suppose that a set of dual variables
satisfy the feasibility conditions (Definition \ref{def:Dual}) for
an arbitrary log-likelihood vector $\gamma$. Then for every vector
$\omega \in \mathcal{K}(\C)$, the following holds \beq \sum_{1\leq i\leq
n}\gamma_i\omega_i > 0.\eeq
\end{lem}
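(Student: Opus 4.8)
The plan is to start from the objective $\sum_i \gamma_i \omega_i$, use the variable-node condition (ii) together with the nonnegativity $\omega_i \geq 0$ that is built into the cone $\mathcal{K}(\C)$, and thereby pass to a sum over the edges of $\mathcal{G}$, which I then regroup by check nodes so that the whole statement reduces to a single inequality that can be checked locally at each check. Concretely, since $\omega_i \geq 0$ and condition (ii) gives $\gamma_i \geq \sum_{c_j \in N(v_i)} \tau_{ij}$, multiplying by $\omega_i$ and summing yields
\[
\sum_i \gamma_i \omega_i \;\geq\; \sum_i \omega_i \sum_{c_j \in N(v_i)} \tau_{ij} \;=\; \sum_{(i,j) \in \mathcal{E}} \omega_i \tau_{ij} \;=\; \sum_{j \in \mathcal{J}} \sum_{i \in N(j)} \omega_i \tau_{ij}.
\]
It therefore suffices to prove the per-check bound $\sum_{i \in N(j)} \omega_i \tau_{ij} \geq 0$ for every check $c_j$, and then sum over $j$.

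The crux is this local inequality, where condition (i) and the cone constraint reinforce each other. Fixing a check $c_j$ and writing $t_i := \tau_{ij}$ for $i \in N(j)$, condition (i) — that $t_i + t_{i'} \geq 0$ for every pair of distinct neighbors — forces \emph{at most one} label $t_i$ to be negative, since two negative labels would violate it. If all $t_i \geq 0$, then nonnegativity of $\omega$ makes the per-check sum nonnegative immediately. Otherwise there is a unique $i_0 \in N(j)$ with $t_{i_0} < 0$, and condition (i) gives $t_i \geq -t_{i_0} = |t_{i_0}|$ for every other neighbor; bounding each such term below by $|t_{i_0}|\,\omega_i$ gives
\[
\sum_{i \in N(j)} \omega_i t_i \;\geq\; -\,|t_{i_0}|\,\omega_{i_0} + |t_{i_0}| \sum_{i \in N(j)\setminus i_0} \omega_i \;=\; |t_{i_0}|\Big(\sum_{i \in N(j)\setminus i_0}\omega_i - \omega_{i_0}\Big),
\]
and the fundamental-cone inequality applied at $(j,i_0)$, namely $\omega_{i_0} \leq \sum_{i \in N(j)\setminus i_0}\omega_i$, makes the parenthesised factor nonnegative. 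Summing $\sum_{i \in N(j)}\omega_i \tau_{ij} \geq 0$ over all checks then delivers $\sum_i \gamma_i \omega_i \geq 0$.

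The step I expect to be delicate is upgrading this to the strict inequality in the statement (read for nonzero $\omega$, since $\omega=0$ gives equality). The chain above is tight only when condition (ii) holds with equality on the support of $\omega$ and every local check sum vanishes, and for a generic $\gamma$ one can exhibit nonzero cone vectors making it tight (e.g. $\gamma \equiv 0$ with $\tau \equiv 0$), so strictness cannot come from the non-strict feasibility conditions alone. I would recover it from the slack actually present in the intended application: under the BSC reduction to the all-zero codeword the likelihoods satisfy $\gamma_i \neq 0$, and the witnesses of interest carry strict slack at the variable nodes touched by $\mathrm{supp}(\omega)$. Tracing when the two displayed inequalities are simultaneously tight — which would force $\omega_{i_0} = \sum_{i \neq i_0}\omega_i$ at the relevant checks and $t_i = |t_{i_0}|$ across the support — together with $\omega \neq 0$ then produces the strict bound, and this case analysis is where the real work lies.
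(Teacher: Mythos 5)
Your argument is essentially identical to the paper's: the paper isolates exactly your per-check inequality $\sum_{v_i\in N(c_j)}\omega_i\tau_{ij}\geq 0$ as a separate lemma (at most one negative $\tau_{ij}$ per check by condition (i), then the fundamental-cone inequality at that index), sums over all checks, regroups the edge sum by variable nodes, and applies condition (ii) together with $\omega_i\geq 0$ — you merely run the same chain in the opposite order. Your worry about strictness is well placed but needs no extra case analysis: the paper obtains the strict conclusion only by quietly reading condition (ii) as a strict inequality inside the proof (and implicitly restricting to nonzero $\omega$, as the FCP definition does), so the honest fix is to adopt that convention rather than to trace tightness through the application.
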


\noindent A special case of Lemma \ref{lem:feasible-->LP} is when
the channel is a BSC, and a set $S$ of the bits have been flipped.
We can also assume without loss of generality that the all zero
codeword was transmitted. Then Lemmas \ref{lem:LP
robustness} and \ref{lem:feasible-->LP} imply that if a dual feasible set exists, then LP
decoder succeeds, which is the conclusion of \cite{Feldman constant
fraction of error}. In this case the log-likelihood vector $\gamma$
takes the value $-1$ over the set $S$ and $1$ over the set
$S^c$. Let us now define a new likelihood vector $\gamma'$ by

\beq \gamma' = \left\{\begin{array}{c} -C ~~ i\in S
\\ 1~~i\in S^c\end{array}\right., \eeq

\noindent for some $C>1$. If a dual feasible set exists that
satisfies the feasibility condition for $\gamma'$, then
it follows that $\text{FCP}(S,C)$ holds. Knowing this and pursuing an argument very similar to \cite{Feldman constant fraction of error} for the construction of dual feasible in expander codes, we are able to prove the following lemma, the proof of which is given in Appendix \ref{sec:proof of strong robust}.   

\begin{thm}
\label{thm:strong robust} Let $\G$ be the factor graph of a code
$\C$ of length $n$ and rate $R = \frac{m}{n}$, and let $\delta > 2/3
+ 1/d_v$. If $\G$ is a bipartite $(\alpha n, \delta d_v)$ expander graph,
then $\C$ has $\text{FCP}(t,C)$, where $t = \frac{3\delta
-2}{2\delta -1}\alpha$ and $C = \frac{2\delta-1}{2\delta - 1 -1/d_v}
$. This means that for every every set $S$ of size $t$,
$\text{FCP}(t,C)$ holds.
\end{thm}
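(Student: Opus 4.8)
The plan is to follow the dual-witness construction of \cite{Feldman constant fraction of error}, but to run it against the \emph{reweighted} likelihood $\gamma'$ instead of the plain $\pm 1$ vector. By the reduction set up just before the statement, it suffices to produce, for \emph{every} error set $S$ with $|S|=tn$, an edge labeling $\{\tau_{ij}\}$ of $\G$ meeting the two requirements of Definition~\ref{def:Dual} with $\gamma$ replaced by the vector $\gamma'$ that equals $-C$ on $S$ and $+1$ on $S^c$. Indeed, Lemma~\ref{lem:feasible-->LP} would then give $\sum_i \gamma'_i\omega_i>0$ for every nonzero $\omega\in\K$, and since $\omega_i\ge 0$ on the cone this reads $C\|\omega_S\|_1<\|\omega_{S^c}\|_1$, i.e. $\text{FCP}(S,C)$; ranging over all $S$ of size $tn$ yields $\text{FCP}(t,C)$.

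First I would assume without loss of generality that the all-zero codeword is sent, fix $S$, and split the checks touching $S$ into \emph{good} checks (exactly one neighbor in $S$) and \emph{bad} checks (at least two). Charging the $d_v|S|$ edges leaving $S$ against the expansion bound $|N(S)|\ge \delta d_v|S|$ gives $|B|\le(1-\delta)d_v|S|$ and hence $|G|\ge(2\delta-1)d_v|S|$, and the same counting applied to every subset $S'\subseteq S$ shows the good checks are spread out enough to reach every error bit. I would then set $\tau_{ij}$ to a negative value on the unique $S$-edge of each good check, a compensating nonnegative value on its remaining $d_c-1$ edges (all landing in $S^c$), and $0$ on every edge of a bad check. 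Requirement (i) of Definition~\ref{def:Dual} then holds edge-pair by edge-pair: on a good check each $S^c$-edge is chosen to cancel the lone negative $S$-edge and all other pairs are nonnegative, while bad checks carry only zeros.

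The crux is requirement (ii) under the reweighting: each $v_i\in S$ must accumulate total label at most $-C$, while each $v_{i'}\in S^c$ must stay at most $+1$. Pushing the error bits down to $-C$ rather than $-1$ forces them to draw more negative weight from their good checks, and the positive weight unavoidably spilled onto $S^c$ must not overload any unflipped bit. I would phrase this as a fractional flow/matching feasibility between error bits and good checks and verify the relevant Hall-type condition from the expansion bound $|N(S')|\ge \delta d_v|S'|$ on subsets $S'\subseteq S$. I expect the largest admissible reweighting to come out exactly at $C=\frac{(2\delta-1)d_v}{(2\delta-1)d_v-1}=\frac{2\delta-1}{2\delta-1-1/d_v}$, which is where the hypothesis is used: $\delta>2/3+1/d_v$ forces $(2\delta-1)d_v>1$ (so $C$ is finite and the denominator positive), while the slack over $2/3$ is what the step from $C=1$ to $C>1$ costs.

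The main obstacle I anticipate is controlling the $S^c$ side, which needs expansion a \emph{second} time, applied to an auxiliary set built from $S$ and the unflipped bits it interacts with through its good checks. For that application to be legal the auxiliary set must have size at most $\alpha n$, and I expect it to have size at most $\frac{2\delta-1}{3\delta-2}|S|$; demanding $\frac{2\delta-1}{3\delta-2}|S|\le \alpha n$ is exactly $|S|\le tn$ with $t=\frac{3\delta-2}{2\delta-1}\alpha$, which is how the stated threshold arises. Getting the flow argument and this second expansion to interlock, rather than verifying each constraint in isolation, is the delicate part; the remainder is a routine adaptation of the $C=1$ argument of \cite{Feldman constant fraction of error}.
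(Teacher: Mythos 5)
Your reduction is the right one and is exactly the paper's: build a dual feasible labeling (Definition~\ref{def:Dual}) for the reweighted likelihood $\gamma'$ equal to $-C$ on $S$ and $+1$ on $S^c$, then invoke Lemma~\ref{lem:feasible-->LP} to conclude $\text{FCP}(S,C)$; your final constants also match. The gap is in the labeling you propose. You place negative weight only on the unique $S$-edge of each ``good'' check and compensating nonnegative weight on its $d_c-1$ edges into $S^c$, with bad checks zeroed out. Requirement (i) then forces every $S^c$-edge of a good check $c_j$ to carry at least $y_j$, the magnitude of its negative label. Now consider an unflipped bit $u$ most of whose $d_v$ neighboring checks are good checks for $S$: it accumulates $\sum_j y_j$ over those checks, while every error bit must draw total weight at least $C>1$ from its at most $d_v$ good checks. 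For such ``poisoned'' unflipped bits this cannot be guaranteed to stay below $1$ by any choice of nonnegative weights on the $S^c$ side; this is precisely the obstruction that forces Feldman \emph{et al.} to also assign \emph{negative} labels to edges of certain unflipped bits, which spreads positive weight further and must itself be controlled. Your remark about a ``second application of expansion to an auxiliary set'' points at the right place, but verifying that the auxiliary set has size at most $\alpha n$ does not tell you how the labeling is modified at those bits; that cascade is the missing construction, and it is the crux, not a routine adaptation.

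The paper closes exactly this hole by importing the $(\delta,\lambda)$-matching of \cite{Feldman constant fraction of error}: a set $M$ of edges, no two sharing a check, such that every bit of $S$ meets at least $\delta d_v$ edges of $M$ and every unflipped bit with at least $(1-\lambda)d_v$ neighbors in $N(S)$ meets at least $\lambda d_v$ edges of $M$, with $\lambda=2-2\delta+1/d_v$. It then uses a single uniform weight, $-x$ on matched edges and $+x$ on the remaining edges of matched checks, with $x=\frac{1}{(1-\lambda)d_v}$, so that requirement (ii) reduces to the two-line count $\sum_j\tau_{ij}\le(1-2\delta)d_vx=-C$ on $S$ and $\le(1-\lambda)d_vx=1$ on $S^c$; the existence of the matching for $|S|\le\frac{3\delta-2}{2\delta-1}\alpha n$ is cited from \cite{Feldman constant fraction of error} rather than re-derived via a Hall-type argument. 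Two smaller corrections: the hypothesis $\delta>2/3+1/d_v$ is not what makes $C$ finite (that already holds for $\delta>1/2+1/(2d_v)$); it is what the matching-existence lemma requires once the poisoned bits must each absorb $\lambda d_v$ matched edges. And your Hall condition would need to be stated for checks that have a unique neighbor in all of $S$, not merely in the subset $S'\subseteq S$ being tested --- another detail the matching lemma already packages.
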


Basically, \cite{Feldman constant fraction of error} shows that if
the conditions of Theorem \ref{thm:strong robust} are satisfied, then LP
succeeds for every error set of size $t$, namely that $\text{FCP}(t,1)$ holds. However Theorem
\ref{thm:strong robust} asserts that, in addition, a strong
robustness holds, i.e. $\text{FCP}(t,C)$ for some $C>1$.
\subsection{Weak LP Robustness for Expander Codes}

We show that for random expander codes a probabilistic analysis
similar to the dual witness analysis of \cite{Alex Decoding} can be
used to find the extents of the fundamental cone property for expander codes, in a weak
sense. We rely on the matching arguments of
\cite{Alex Decoding},  with appropriate
adjustments. The following definition is given in \cite{Alex
Decoding}.
\begin{definition}
\label{def:p-q matching}
For nonnegative integers $p$ and $q$, and a
set $F$ of variable nodes, a $(p,q)$-matching on $F$ is defined by
the following conditions:
\begin{enumerate}
\item[(a)] each bit $v_i \in F$ must be matched with $p$ distinct check nodes,
and

\item[(b)] each variable node $v_{i'} \in F^c$ must be connected with
\beq X_{i'} := \max\{q-d_v + Z_{i'} , 0\} \eeq  checks nodes from
the set $N(F)$, that are different from the check nodes that the
nodes in $F$ are matched to, where $Z_{i'}$ is defined as $Z_{i'} :=
|N(i') \cap N(F)|$.
\end{enumerate}
\end{definition}

We prove the following lemma that relates the existence of a
$(p,q)$-matching to the fundamental cone property of a code $\C$.
This lemma is proved in Appendix \ref{sec:proofs of lemma matching}.
\begin{lem}
\label{lem:matching} Let $\C$ be a code of rate $R$ with a bipartite
factor graph $\G$, where every variable node has degree $d_v$. Let
$S$ be a subset of the variable nodes of $\G$. If a $(p,q)$-matching
on $S$ exists, then $\C$ has the $\text{FCP}(S,
\frac{2p-d_v}{d_v-q}). $
\end{lem}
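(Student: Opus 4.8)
The plan is to translate the $(p,q)$-matching on $S$ (the set $F$ of Definition \ref{def:p-q matching}) into a dual-feasible edge labelling for a reweighted likelihood vector, and then to quote Lemma \ref{lem:feasible-->LP}. Set $\gamma'_i=-C$ for $i\in S$ and $\gamma'_i=1$ for $i\in S^c$, where $C=\frac{2p-d_v}{d_v-q}$. As the discussion preceding Theorem \ref{thm:strong robust} records, once I exhibit edge labels $\{\tau_{ij}\}$ meeting the two conditions of Definition \ref{def:Dual} for $\gamma'$, Lemma \ref{lem:feasible-->LP} gives $\sum_i\gamma'_i\omega_i>0$ for every nonzero $\omega\in\K$; since $\omega\geq 0$ this is exactly $C\|\omega_S\|_1<\|\omega_{S^c}\|_1$, i.e.\ $\text{FCP}(S,C)$. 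So the whole content of the lemma is the construction and verification of $\{\tau_{ij}\}$.

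For the construction I would read the two families of matched edges straight off Definition \ref{def:p-q matching} and set, with $v:=\frac{1}{d_v-q}$: label $\tau_{ij}=-v$ on the $p$ edges joining each $i\in S$ to its matched checks, and on the $X_{i'}$ edges joining each $i'\in S^c$ to its designated checks in $N(S)$ that avoid the checks used by $S$; label $\tau_{ij}=+v$ on every remaining edge incident to a check of $N(S)$; and label $\tau_{ij}=0$ on every edge incident to a check outside $N(S)$ (such a check touches only $S^c$). Condition (ii) is then a direct computation. For $i\in S$ all $d_v$ of its checks lie in $N(S)$, so $\sum_{j}\tau_{ij}=-pv+(d_v-p)v=(d_v-2p)v=-C$. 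For $i'\in S^c$, of its $Z_{i'}=|N(i')\cap N(S)|$ edges into $N(S)$ exactly $X_{i'}$ carry $-v$ and the rest carry $+v$, while its $d_v-Z_{i'}$ remaining edges carry $0$, so $\sum_j\tau_{i'j}=(Z_{i'}-2X_{i'})v$; plugging in $X_{i'}=\max\{q-d_v+Z_{i'},0\}$ shows $Z_{i'}-2X_{i'}\leq d_v-q$ in both regimes, hence $\sum_j\tau_{i'j}\leq (d_v-q)v=1$.

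The step to watch is condition (i), $\tau_{ij}+\tau_{i'j}\geq 0$ for every check $j$ and every pair of its neighbours, because it forces each check to carry at most one negative label. Away from $N(S)$ every label is $0$, so the condition is automatic; on a check of $N(S)$ every positive label equals $+v$ and cancels a single $-v$, so the condition holds provided no check ever receives two $-v$ labels. This check-disjointness is exactly where I must use that the $(p,q)$-matching is a genuine matching: the $p$-per-bit matched checks on the $S$-side must be globally distinct, the designated $S^c$-side checks are required by Definition \ref{def:p-q matching} to avoid the $S$-matched checks, and distinct $S^c$-bits must be routed through distinct checks. Establishing this disjointness (together with the implicit feasibility counts $p|S|\leq|N(S)|$ and $\sum_{i'}X_{i'}\leq |N(S)\setminus M|$ that make the labelling well defined, where $M$ denotes the set of $S$-matched checks) is the one place the combinatorics of the matching, rather than bookkeeping, does real work. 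Everything else is the reweighting trick of zeroing out the edges to checks outside $N(S)$, which is precisely what turns the naive denominator $d_v$ into $d_v-q$ and yields the stated constant $C$.
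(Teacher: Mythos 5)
Your reduction of $\text{FCP}(S,C)$ to the existence of a dual-feasible labelling for the reweighted vector $\gamma'$ is sound and mirrors how the paper handles the strong-threshold case (Theorem \ref{thm:strong robust}), but it is not the route the paper takes for this lemma. The paper's proof is a direct primal argument: for each $i\in S$ and each of its $p$ matched checks $c^i_j$ it writes the fundamental-cone inequality $\omega_i\leq\sum_{l\in N(c^i_j)\setminus v_i}\omega_l$, sums all $p|S|$ of them, and counts multiplicities --- each $\omega_i$ with $i\in S$ appears $p$ times on the left and at most $d_v-p$ times on the right (by global distinctness of the matched checks), and each $\omega_{i'}$ with $i'\in S^c$ appears at most $d_v-q$ times on the right (this is exactly what condition (b) of Definition \ref{def:p-q matching} buys). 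This yields $p\sum_{i\in S}\omega_i\leq(d_v-p)\sum_{i\in S}\omega_i+(d_v-q)\sum_{i\in S^c}\omega_i$ and hence the claim in a few lines, with no edge labelling at all.

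The step you flag as ``the one place the combinatorics does real work'' is in fact a genuine gap in your version. Your labelling puts $-v$ on the $X_{i'}$ designated edges of every $i'\in S^c$, so condition (i) of Definition \ref{def:Dual} forces all of these designated checks to be distinct across different bits of $S^c$. But Definition \ref{def:p-q matching} does not provide this: condition (b) is a per-vertex counting condition (each $i'$ individually must have $X_{i'}$ neighbours in $N(S)$ avoiding the $S$-matched checks), and nothing prevents two bits of $S^c$ from designating the same unmatched check, in which case that check carries two labels equal to $-v$ and condition (i) fails. The repair is to notice that the $S^c$-side negative labels are unnecessary: put $+v$ only on the non-matched edges into the $p|S|$ matched checks, and $0$ on every edge into a non-matched check. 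Then each $i\in S$ still collects at most $-pv+(d_v-p)v=-C$; each $i'\in S^c$ collects at most $(d_v-q)v=1$, because at most $Z_{i'}-X_{i'}\leq d_v-q$ of its neighbours are matched checks; and every check carries at most one negative label. With that modification your argument closes, though it then amounts to a dualized restatement of the paper's counting.
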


\cite{Alex Decoding} provides a probabilistic tool for the
existence of $(p,q)$-matchings in regular bipartite expander
graphs, which helps answer the question of how large an error set LP
decoding can fix. For example, for a random LDPC(8,16)  code, the
probabilistic analysis implies that with high probability, a fraction $0.002$ of errors
is recoverable using LP decoder. However, taking the specifications
of the matching that leads to this conclusion and applying Lemma
\ref{lem:matching}, it turns out that for an error set of size
$0.002n$, the robustness factor is at least $C=1.3$, i.e the code
has $\text{FCP}(0.002n,1.3)$.

\subsection{Weak LP Robustness for Codes with $\Omega(\log\log(n))$ Girth}
Recall that $\G = (X_v,X_c,\mathcal{E})$ is used to denote the
factor graph of the parity check matrix $H$ (or of code $\C$), where
$X_v$ and $X_c$ are the sets of variable and check nodes
respectively and $\mathcal{E}$ is the set of edges. Also recall that
the girth of $\G$ is defined as the size of the shortest cycle in
$\G$. Without loss of generality, we assume that $X_v =
\{v_1,v_2,\cdots,v_n\}$, where $v_i$ is the variable node
corresponding to the $i^\text{th}$  bit of the codeword.  Let $T\leq
\frac{1}{4} \text{girth}(\mathcal{G})$ be fixed. The following notions are
defined in \cite{ADS}.
\begin{definition}
A tree $\mathcal{T}$  of height $2T$ is called a skinny subtree of
$\G$, if it is rooted at some variable node $v_{i_0}$,  for every
variable node $v$ in $\mathcal{T}$ all the neighboring check nodes
of $v$ in $\G$ are also present in $\mathcal{T}$, and for every
check node $c$ in $\mathcal{T}$ exactly two neighboring variable
nodes of $c$ in $\G$ are present in $\mathcal{T}$.
\end{definition}

\begin{definition}
Let $\w\in[0,1]^T$ be a fixed vector. A vector $\beta^{(\w)}$ is
called a minimal $T$-local deviation, if there is a skinny subtree
of $\G$ of height $2T$, say $\mathcal{T}$, so that for every
variable node $v_i~1\leq i \leq n$, \beq \nonumber
\beta_i^{(\w)}=\left\{\begin{array}{l}\w_{h(i)}~~~~\text{if
$v_i\in \mathcal{T}\setminus\{v_{i_0}\}$} \\
0~~~~\text{otherwise}\end{array}\right.,\eeq

\noindent where $h_i = \frac{1}{2}d(v_{i_0},v_i)$.
\end{definition}

The key to the derivations of \cite{ADS} is the following  lemma:
\begin{lem}[Lemma 1 of \cite{ADS}]
\label{lem:Exp} For any vector $\z\in\mathcal{P}$, and any positive
vector $\w\in[0,1]^T$, there exists a distribution on the minimal
$T$-local deviations $\beta^{(\w)}$, such that \beq \nonumber
\stexp\beta^{(\w)} = \alpha \z, \eeq \noindent where $0<\alpha\leq 1$.
\end{lem}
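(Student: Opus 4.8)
The plan is to exhibit the required distribution explicitly as the law of a random \emph{skinny} subtree generated by a top-down branching process, and then to read off $\stexp\beta^{(\w)}$ by linearity. First I would fix the root $v_{i_0}$ and grow a skinny subtree of height $2T$ level by level: whenever a variable node $v$ has already been placed in $\mathcal{T}$, every check $c\in N(v)$ is automatically included (as skinniness demands), and at each such check I must commit to exactly one further variable neighbour $u\in N(c)\setminus\{v\}$ along which to continue the branch. The whole construction therefore reduces to specifying, for each active check $c$ reached from its parent $v$, a sub-probability vector over the candidate children $u\in N(c)\setminus\{v\}$. The natural choice is to make the selection probability of $u$ proportional to the pseudocodeword coordinate $z_u$; the defining inequalities of $\mathcal{P}$, equivalently the fundamental cone constraints $z_i\le\sum_{i'\in N(c)\setminus i}z_{i'}$, are exactly what guarantee that such a proportional rule can be normalised to a valid (sub)stochastic assignment, and they are the only structural property of $\z$ that the argument will use.

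Next I would invoke the girth hypothesis $T\le\tfrac14\,\text{girth}(\G)$, which forces the radius-$2T$ ball around $v_{i_0}$ to be a tree, so that each reachable variable node $v_i$ is joined to the root by a \emph{unique} path and can therefore enter $\mathcal{T}$ in only one way. By linearity of expectation $\stexp\beta^{(\w)}_i=\w_{h(i)}\,\Prob[v_i\in\mathcal{T}]$, and acyclicity lets me evaluate $\Prob[v_i\in\mathcal{T}]$ as the product of the branching probabilities along that unique path. The remaining task is to tune the per-check selection rule together with a single global scalar $\alpha$ so that this product collapses to $\alpha z_i/\w_{h(i)}$, yielding $\stexp\beta^{(\w)}_i=\alpha z_i$ for every coordinate. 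The scalar $\alpha$ is then pinned down by the requirement that the selection probabilities out of each check sum to at most one, after which one checks a posteriori that the resulting value satisfies $0<\alpha\le1$.

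I expect the crux, and the main obstacle, to be verifying that the marginals come out \emph{exactly} proportional to $\z$ rather than merely dominated by it. The path-product form of $\Prob[v_i\in\mathcal{T}]$ naturally produces a telescoping ratio of consecutive coordinates, and making it reduce to $z_i$ alone, independently of the intermediate nodes on the path, is precisely where the fundamental cone inequalities must be applied, level by level, in an induction on the depth $h=1,\dots,T$. The delicate point is that these inequalities only bound $\sum_{u\in N(c)\setminus v}z_u$ from below by $z_v$, so the normalisation at each check must be absorbed into a level-dependent scaling, and one has to confirm that the accumulated scaling is simultaneously consistent across \emph{all} branches and compatible with the prescribed weights $\w$. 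Once the inductive step is established (following the computation-tree analysis of \cite{ADS}), assembling the contributions of all skinny subtrees produces a genuine distribution on minimal $T$-local deviations whose mean is $\alpha\z$, and the final bookkeeping gives $0<\alpha\le1$, completing the proof.
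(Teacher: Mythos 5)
First, note that the paper does not prove this statement at all: it is imported verbatim as Lemma~1 of \cite{ADS}, so there is no internal proof to compare your attempt against; it has to be measured against the argument in that reference. Your overall plan --- grow a random skinny subtree top-down from a (possibly random) root, at each check select the continuing variable with probability proportional to the pseudocodeword coordinates of the eligible neighbours, use the girth bound $T\le\frac14\text{girth}(\G)$ to get a unique root-to-node path, and read off the marginal as a product of transition probabilities --- is indeed the right kind of construction and matches the spirit of the computation-tree machinery in \cite{ADS}.

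However, the step you yourself flag as ``the crux'' is exactly where the proof lives, and you do not supply it. With transition probabilities $z_v/\sum_{v'\in N(c)\setminus\{u\}} z_{v'}$, the cone inequalities $z_u\le\sum_{v'\in N(c)\setminus\{u\}} z_{v'}$ only give the one-sided telescoping bound $\Prob[v_i\in\mathcal{T}]\le z_i/z_{i_0}$; the per-check normalisers are path-dependent, so the marginal comes out \emph{dominated} by a multiple of $z_i$, not equal to one. Exact equality is indispensable here: the lemma is applied to the functional $f_C^{(S)}$, whose coefficients have both signs, so a coordinatewise inequality $\stexp\beta^{(\w)}\le\alpha\z$ yields nothing about $f_C^{(S)}(\z)$. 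Closing this gap requires a genuinely nontrivial piece of bookkeeping (the root distribution, the level-dependent rescalings, and the weights $\w$ must be tuned jointly so that the path products cancel exactly), and you also gloss over the fact that the depth $h(i)$ of a fixed node --- hence the weight $\w_{h(i)}$ it receives --- changes with the random root, so the identity $\stexp\beta^{(\w)}_i=\w_{h(i)}\Prob[v_i\in\mathcal{T}]$ is not available as stated once the root is randomised. Deferring both points to ``the computation-tree analysis of \cite{ADS}'' leaves the proposal as a plausible plan rather than a proof.
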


\noindent Lemma \ref{lem:Exp} has the following interpretation. If a
linear property holds for all minimal $T$-local deviations (\textit{e.g.}
$f(\beta^{(\w)}) \geq 0$, where $f(.)$ is a linear operator), then it also holds
for all pseudo-codewords (i.e. $f(\z) \geq 0~\forall \z\in\mathcal{P}$).
Interestingly enough, the robustness of LP decoding for a given set of
bit flips $S$ has a linear certificate, namely
$\text{FCP}(S,C)$\footnote{Note that this is only true for bit fliping
channels, where the output alphabet in the binary field.}. In other words, if we define:

\beq \nonumber f_C^{(S)}(\x) = \sum_{i\in S^c} x_i -C\sum_{i\in S}
x_i, \eeq

%
\noindent then $\text{FCP}(S,C)$ holds, if and only if
$f_1^{(S)}(\z)\geq 0$ for every pseudocodeword  $\z\in \mathcal{P}$.
Therefore, according to Lemma \ref{lem:Exp}, it suffices that the
condition be true for all $T$-local deviations. Furthermore, for
arbitrary $C>1$, if $f_C^{(S)}(\beta^{(\w)})\geq 0$ for all minimal
$T$-local deviations $\beta^{(\w)}$, then it follows that the code
has the $\text{FCP}(S,C)$ property. This simple observation helps us
extend the probabilistic analysis of \cite{ADS} to  robustness
results for LP decoding. The resulting key theorem is mentioned
below, the proof of which can be found in Appendix \ref{sec:proofs of thms}. In order to state the theorem, first we define  $\eta$ to be a
random variable that takes the value $-C$ with probability $p$ and
value $1$ with probability $1-p$. Also, define the sequences of random
variables $X_i,Y_i,~ i\geq 0$, in the following way:

\bea \nonumber Y_0 &=& \eta, \\
     \nonumber X_i &=& \min\{Y_i^{(1)},\dots,Y_i^{(d_c-1)}\}~~\forall i>0, \\
     \nonumber Y_i &=&
     2^i\eta+X_{i-1}^{(1)}+\cdots+X_{i-1}^{(d_v-1)}~~\forall i>0, \\
\eea \noindent Where $X^{(j)}$s are independent copies of a random
variable $X$.

\begin{thm}
\label{thm:ADS robust 3}
 Let $0\leq p \leq 1/2$ be the probability
of bit flip, and $S$ be the random set of flipped bits.  If for some
$j\in \mathbb{N}$,  \begin{eqnarray*}
c = \gamma^{1/(d_v-2)}\min_{t\geq 0}\stexp{e^{-tX_j}}<1,
\end{eqnarray*}
\noindent where $\gamma =  (d_c-1)\frac{C_{R}+1}{C_{R}}(\frac{C_{R} \cdot p}{1-p})^{1/(C_{R}+1)}(1-p)<1$, Then with probability
at least $1-O(n)c^{d_v(d_v-1)^{T-1}}$  the code $\mathcal{C}$ has the
$\text{FCP}(S,C)$, where   $T$ is any integer with $j \leq T<
1/4\text{girth}(\mathcal{G})$.
\end{thm}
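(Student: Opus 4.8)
The plan is to follow the primal-deviation strategy of \cite{ADS}, upgrading its LP-success certificate to the stronger linear certificate $f_C^{(S)}$. First I would invoke the observation preceding the theorem together with Lemma \ref{lem:Exp}: since $f_C^{(S)}$ is linear and $\E\beta^{(\w)}=\alpha\z$ with $\alpha>0$, it suffices to exhibit \emph{one} positive weight vector $\w$ for which $f_C^{(S)}(\beta^{(\w)})\geq 0$ holds for \emph{every} minimal $T$-local deviation $\beta^{(\w)}$, since this forces $\alpha f_C^{(S)}(\z)=\E f_C^{(S)}(\beta^{(\w)})\geq 0$, hence $f_C^{(S)}(\z)\geq 0$ for all $\z\in\mathcal{P}$, which is exactly $\text{FCP}(S,C)$. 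I would fix the geometric weights $\w_h=2^{T-h}$ (doubling toward the root), the choice that produces the factor $2^i$ in the recursion for $Y_i$. Writing $\eta_i=1$ for $i\in S^c$ and $\eta_i=-C$ for $i\in S$ (so the $\eta_i$ are i.i.d.\ copies of $\eta$), every deviation satisfies $f_C^{(S)}(\beta^{(\w)})=\sum_{v_i\in\mathcal{T}\setminus\{v_{i_0}\}}\w_{h(i)}\eta_i$, so the whole question reduces to whether $\min_{\mathcal{T}}\sum_i \w_{h(i)}\eta_i\geq 0$ over all skinny trees $\mathcal{T}$.

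Next I would show this minimum is computed by the min-sum recursion defining $X_i,Y_i$. In a skinny tree each check node retains its parent plus exactly one of its remaining $d_c-1$ variable neighbors, so minimizing over the choice of child yields $X_i=\min\{Y_i^{(1)},\dots,Y_i^{(d_c-1)}\}$; each variable node retains all of its $d_v-1$ child-checks and contributes its own weighted sign, yielding $Y_i=2^i\eta+X_{i-1}^{(1)}+\cdots+X_{i-1}^{(d_v-1)}$. Because $T<\tfrac14\text{girth}(\G)$, the radius-$2T$ neighborhood of any root is a genuine tree, so the variable nodes appearing are distinct, the relevant $\eta_i$ are independent, and the value at a fixed root is distributed as $\tilde Y_T=X_{T-1}^{(1)}+\cdots+X_{T-1}^{(d_v)}$ (the root has $d_v$ child-checks and, being excluded, contributes no $\eta$ term). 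A union bound over the $n$ possible roots then produces the $O(n)$ prefactor and reduces everything to bounding $\Prob[\tilde Y_T<0]$ for a single root.

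To bound this tail I would use a Chernoff estimate, $\Prob[\tilde Y_T<0]\leq\min_{t\geq0}\E e^{-t\tilde Y_T}=\min_{t\geq0}(\E e^{-tX_{T-1}})^{d_v}$, and propagate the exponential moment through the recursion. The minimum in $X_i$ is handled by $\E e^{-t\min_j Y_i^{(j)}}=\E\max_j e^{-tY_i^{(j)}}\leq(d_c-1)\,\E e^{-tY_i}$, which contributes the factor $d_c-1$; the sum in $Y_i$ factorizes by independence into $\E e^{-t2^i\eta}\,(\E e^{-tX_{i-1}})^{d_v-1}$, and optimizing $\E e^{-s\eta}=pe^{sC}+(1-p)e^{-s}$ over $s\geq0$ gives precisely $\min_s\E e^{-s\eta}=\gamma/(d_c-1)$. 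Iterating from the base level $j$ up to the root, the per-variable branching $d_v-1$ makes the exponent on the base moment grow like the leaf count $d_v(d_v-1)^{T-1}$, while the accumulated $\gamma$-factors form a geometric series in $(d_v-1)$ whose summation carries the denominator $d_v-2$, producing the power $\gamma^{1/(d_v-2)}$. Collecting the base contribution $\min_{t\geq0}\E e^{-tX_j}$ then gives $\Prob[\tilde Y_T<0]\leq c^{\,d_v(d_v-1)^{T-1}}$ with $c=\gamma^{1/(d_v-2)}\min_t\E e^{-tX_j}$, and combining with the union bound proves the theorem.

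The main obstacle I anticipate is the bookkeeping in this last step. Because the weights $2^{T-h}$ rescale the Chernoff parameter $t$ by a factor of two at each level, the $\eta$-moment is never evaluated at the same $s$ on two different levels, so closing the recursion into a single clean per-level factor $\gamma$ (rather than a level-dependent one) requires choosing the base parameter carefully and controlling the geometric sum of the rescaled moments. Verifying that the hypothesis $c<1$ at a finite level $j$ genuinely propagates into a contraction all the way up the tree---so that the doubly-exponential exponent $d_v(d_v-1)^{T-1}$ and the exact power $\tfrac{1}{d_v-2}$ on $\gamma$ emerge---is the delicate part; by comparison the reduction via Lemma \ref{lem:Exp} and the identification of the min-sum structure (the first two paragraphs) are essentially routine.
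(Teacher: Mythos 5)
Your overall architecture coincides with the paper's: reduce $\text{FCP}(S,C)$ to a statement about minimal $T$-local deviations via Lemma \ref{lem:Exp} and the linearity of $f_C^{(S)}$, identify the worst deviation with a min-sum recursion over skinny subtrees of the (cycle-free, by the girth hypothesis) depth-$2T$ neighborhood, union-bound over the $n$ roots, and control $\Prob[X^{(1)}_{T-1}+\cdots+X^{(d_v)}_{T-1}\leq 0]$ by a Chernoff bound propagated through the recursion as in Lemma 8 of \cite{ADS}. Your computation $\min_{s\geq 0}\E e^{-s\eta}=\min_s\bigl(pe^{sC}+(1-p)e^{-s}\bigr)=\gamma/(d_c-1)$ is also correct.

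The genuine gap is exactly the one you flag in your last paragraph, and it is caused by your choice of weight vector. With pure doubling weights $\w_h=2^{T-h}$ on \emph{every} level, the recursive Laplace-transform bound accumulates factors $(d_c-1)\E e^{-t\,2^{i}\eta}$ with a different, exponentially growing argument at each level; since $\E e^{-s\eta}\geq p\,e^{sC}\to\infty$ as $s\to\infty$, the factors from the levels near the root blow up unless $t$ is taken so small that the base moment $\E e^{-tX_j}$ becomes useless. No careful "choice of the base parameter" rescues this: the product of level-dependent moments does not collapse to the single per-level constant $\gamma$ that the stated bound requires. The paper's resolution (inherited from \cite{ADS}) is a hybrid weight vector $\omega=(1,2,\dots,2^j,\rho,\rho,\dots,\rho)$: geometric only up to level $j$ (which is what builds the $2^i\eta$ term into the \emph{definition} of $X_j$ and hence into the hypothesis $c<1$), and a constant weight $\rho$ on all levels above $j$. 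The constant weight makes every level above $j$ contribute the \emph{same} factor $(d_c-1)\E e^{-t\rho\eta}$ raised to $(d_v-1)^k$, so the exponents sum to $\frac{(d_v-1)^{T-j-1}-1}{d_v-2}$, producing the power $\gamma^{1/(d_v-2)}$ after $\rho$ and $t$ are jointly optimized. Without this truncation of the doubling, the final step of your argument does not go through.
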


For $d_c = 6$ and $d_v = 3$, a lower bound on the robustness parameter $C$ that results from Theorem \ref{thm:ADS robust 3} is plotted against the probability of bit flip $p$, in Figure \ref{fig:C_vs_P3}.

\section{Implications of LP robustness}

\subsection{Mismatch Tolerance}

One of the direct consequences of the robustness of LP decoding is
that if there is a slight mismatch  in the implementation of the LP
decoder, its performance does not degrade significantly.
More formally, suppose that due to noise, quantization or some other factor, a
mismatched log-likelihood vector $\gamma' = \gamma + \Delta\gamma$
is used in the LP implementation. We refer to such a decoder as a
\emph{mismatched LP decoder}. Since the channel is BSC, the entries
of $\gamma$ all have the same amplitude $g$. We also define $\delta
= \max_{i}|\Delta\gamma_i|$, and assume that $\delta < g$. We can
prove the following theorem.

\begin{thm}
\label{thm:mismatch} Suppose that $S$ is the set of bit errors. Let
$C = \frac{g+\delta}{g-\delta}$. If $\C$ has $\text{FCP}(S,C)$, then
the mismatched LP decoder corrects all errors and recovers the original codeword.
\end{thm}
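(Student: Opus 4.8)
The plan is to reduce the claim to a statement about a single linear functional on the fundamental cone, and then to read off positivity of that functional directly from $\text{FCP}(S,C)$. First I would invoke the standard symmetry reduction and assume without loss of generality that the transmitted codeword is the all-zero codeword, $\x^{(c)}=0$; the relabeling that effects this only flips signs of coordinates of both $\gamma$ and $\Delta\gamma$, so the amplitude $g$ and the bound $\delta=\max_i|\Delta\gamma_i|$ are preserved. On a BSC the true likelihood vector then has the clean sign structure $\gamma_i=-g$ for the flipped bits $i\in S$ and $\gamma_i=+g$ for $i\in S^c$. The mismatched decoder solves the LP with objective $\gamma'^{T}\x=(\gamma+\Delta\gamma)^{T}\x$ over $\x\in\mathcal{P}$.

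Next I would observe that, since the all-zero codeword lies in $\mathcal{P}$ and $\mathcal{P}\subseteq\K(\H)$ (the fundamental cone is the smallest cone encompassing $\mathcal{P}$), it suffices to prove that $\gamma'^{T}\omega>0$ for every nonzero $\omega\in\K(\H)$. This forces $0$ to be the unique minimizer of $\gamma'^{T}\x$ over $\mathcal{P}$ and hence the output of the mismatched decoder. To obtain the needed lower bound I would use the two structural facts at hand: every coordinate of $\omega$ is nonnegative, and $|\Delta\gamma_i|\leq\delta<g$. Worst-casing the perturbation coordinatewise gives $\gamma'_i\geq-(g+\delta)$ on $S$ and $\gamma'_i\geq g-\delta>0$ on $S^c$, so that
\beq \gamma'^{T}\omega \;\geq\; -(g+\delta)\,\|\omega_S\|_1 + (g-\delta)\,\|\omega_{S^c}\|_1. \eeq

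Finally I would close the argument by substituting the hypothesis. With $C=\frac{g+\delta}{g-\delta}$, the $\text{FCP}(S,C)$ property yields $\|\omega_{S^c}\|_1>C\|\omega_S\|_1$, equivalently $(g-\delta)\|\omega_{S^c}\|_1>(g+\delta)\|\omega_S\|_1$, which makes the right-hand side above strictly positive; note that FCP also rules out the degenerate case $\omega_{S^c}=0$ for nonzero $\omega$, so the bound is genuinely strict and optimality of the all-zero codeword is unique rather than merely tied. I do not expect a serious obstacle here: the argument is essentially bookkeeping. The one point requiring care is that the value $C=\frac{g+\delta}{g-\delta}$ is dictated \emph{precisely} by the worst-case coordinatewise distortion, and I would verify that every inequality stays strict through the chain. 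The conceptual content is simply that $\text{FCP}$, introduced to quantify robustness against enlarging the error set $S$, is exactly the right certificate to absorb a bounded multiplicative perturbation of the objective.
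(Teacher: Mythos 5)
Your proof is correct and takes essentially the same route as the paper's: reduce to the all-zero codeword, lower-bound the mismatched objective $\gamma'^{T}\omega$ coordinatewise over the fundamental cone using $\gamma'_i\geq -(g+\delta)$ on $S$ and $\gamma'_i\geq g-\delta$ on $S^c$, and invoke $\text{FCP}(S,C)$ with $C=\frac{g+\delta}{g-\delta}$ to conclude strict positivity and hence unique optimality of the all-zero codeword. The only difference is cosmetic (the paper starts from the FCP inequality and multiplies through by $g-\delta$, whereas you start from the objective and bound it below), and your explicit remarks on strictness and on $\mathcal{P}\subseteq\K(\H)$ are points the paper leaves implicit.
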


\begin{proof}
We assume without loss of generality that the all zero codeword is
transmitted. We show that if $\text{FCP}(S,C)$ holds, then the all
zero codeword is the minimum cost vector in the polytope
$\mathcal{P}$. Suppose $\omega$ is a nonzero vector in
the fundamental code $\mathcal{K}$. We begin with the definition of
$\text{FCP}(S,C)$ and write
 \bea -C\sum_{i\in S}\omega_i + \sum_{i\in S^c}\omega_i >0.\eea
\noindent Multiply both sides by $(g-\delta)$:

 \bea -\sum_{i\in S}(g+\delta)\omega_i + \sum_{i\in S^c}(g-\delta)\omega_i
> 0.\eea
\noindent We also know from the definition of $\delta$ that
$\gamma'_i
>(g-\delta) $ for $i\in S^c$, and $\gamma'_i > -g - \delta$ for $i\in
S^c$, and that $\omega\geq 0$. Therefore
 \bea -\sum_{i\in S\cup S^c}\gamma'_i\omega_i > 0,\eea
\noindent which proves that the all zero codeword is the unique
minimum cost solution of the mismatched LP.

\end{proof}

\subsection{Pseudocodewords and High Error Rate Subsets}
\label{sec:HER subset} We showed in Section \ref{sec:robutsness}
that for an appropriate code $\C$, even when LP decoder fails to
recover an actual codeword from the output of a BSC, the $\ell_1$
distance between the obtained pseudocodeword and the actual codeword
can be bounded by a finite factor of excess errors (see equation \ref{eq:robustness}). We now show that this property allows us to use the output of LP decoder to find a \emph{high error} rate subset of the bits of linear size, namely a subset of
bits over which the fraction of errors is significantly larger than the fraction of errors in the
entire received codeword. Obtaining  such  \emph{importance} subset is very crucial, since it provides additional information about a significant proportion of the bits which can be used to improve the decoder's performance. For instance, one can impose additional soft or hard
constraints on the importance subset, and solve a constrained
linear program or other post processing algorithms following the  initial linear program. This forms the
idea for the proposed iterative LP decoding algorithm which will be outlined in Section \ref{sec:Algorithm}.

Consider a code $\C$ of length $n$ and rate $R$, and a codeword
$\x^{(c)}$ from $\C$ transmitted through a bit flipping channel. Suppose that a
set $K$ of the bits get flipped, where the cardinality of $K$ is
$(1+p^*)\epsilon n$ for some $0<p^*<1$ and $\epsilon>0$. Denote the
received vector by $\x^{(r)}$. We are interested in the case where LP
fails, so the LP minimal $\x^{(p)}$ is a fractional pseudocodeword.
However, the size of the error set is only slightly larger than the
correctable size $p^*n$. In other words, we assume that for some subset
$K_1\subset K$ of size $p^*n$, the code has $\text{FCP}(K_1,C)$,
for some $C>1$. We show in the next lemma that the
index set of the largest $k$ entries of the vector $\x^{(r)}-\x^{(p)}$ has a significant overlap with $K$ with high probability, and is thus a high error rate subset of entries. The following theorem formalized this claim.

\begin{figure}
\centering
  \includegraphics[width= 0.4\textwidth]{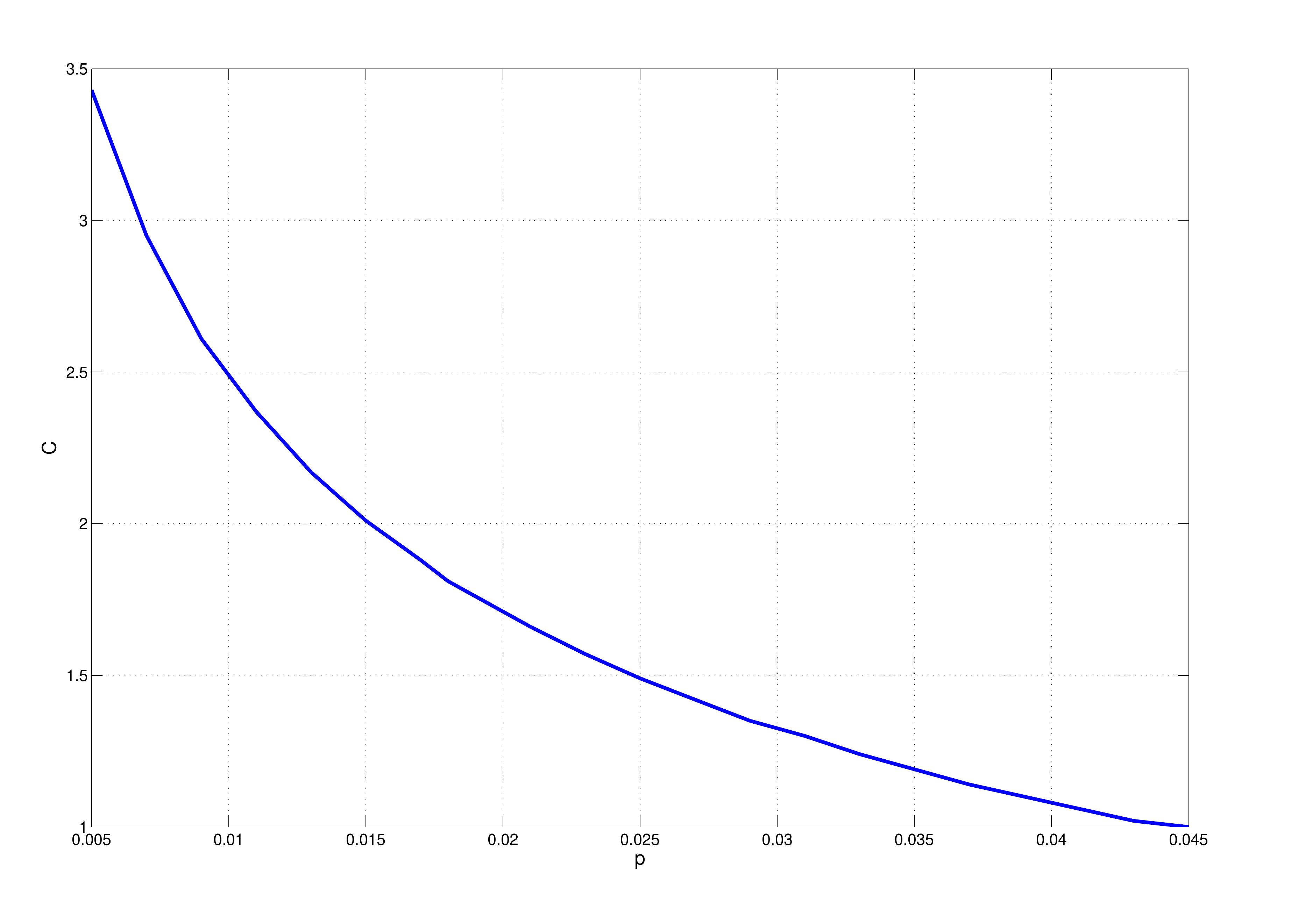}
  \caption{ \scriptsize{Approximate upper bound for the robustness factor $C$ as a function of error probability $p$ for $d_c = 6$ and $d_v = 3$, based on Theorem \ref{thm:ADS robust 3}.}}
  \label{fig:C_vs_P3}
\end{figure}

\begin{thm}
\label{thm:importance set} Suppose that a codeword $\x^{(c)}$ is
transmitted through a bit flipping channel, and the output $\x^{(r)}$ differs from the
input in a set $K$ of the bits with $|K| = p^*(1+\epsilon)n$, for
some $0<p^*<1$ and $\epsilon>0$. Also, suppose that for a subset
$K_1\subset K$ of size $p^*n$, $\text{FCP}(K_1,C)$ holds, for
some $C>1$, and that the LP minimal is the pseudocodeword $\x^{(p)}$. If $L$ is the set of the $p^*(1+\epsilon)n$ largest entries of the
vector $\x^{(r)}-\x^{(p)}$ in magnitude, then the fraction of errors
in $\x^{(r)}$ over the set $L$ is at least
$1-2\frac{C+1}{C-1}\epsilon $.
\end{thm}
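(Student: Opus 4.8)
The plan is to reduce the statement to a counting argument on the largest-magnitude entries of the difference vector $\x^{(r)}-\x^{(p)}$, with all the force coming from the robustness bound of Lemma \ref{lem:LP robustness}. As usual I would first assume without loss of generality that the all-zero codeword is transmitted, so that $\x^{(r)}$ is exactly the indicator vector of the error set $K$, and the ``errors'' over any subset are precisely its intersection with $K$. The first substantive step is to apply Lemma \ref{lem:LP robustness} with $S=K_1$: since $\text{FCP}(K_1,C)$ holds and $\x^{(c)}=0$, the lemma gives $\|\x^{(p)}\|_1 < 2\frac{C+1}{C-1}\|\x^{(r)}_{K_1^c}\|_1$. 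Because $\x^{(r)}$ is the indicator of $K$, the tail $\|\x^{(r)}_{K_1^c}\|_1$ equals $|K\setminus K_1| = |K|-|K_1| = p^*\epsilon n$. Writing $\kappa = 2\frac{C+1}{C-1}$, this yields the key mass bound $\|\x^{(p)}\|_1 < \kappa p^*\epsilon n$: the whole pseudocodeword carries only $O(\epsilon)$ of $\ell_1$-mass.

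Next I would describe the difference vector $\dd = \x^{(r)}-\x^{(p)}$ coordinatewise. Since the LP output satisfies $\x^{(p)}\in\mathcal{P}\subseteq[0,1]^n$, we have $|\dd_i| = 1-\x^{(p)}_i$ for $i\in K$ and $|\dd_i| = \x^{(p)}_i$ for $i\in K^c$. Thus the set $L$ of the $|K|$ largest-magnitude entries of $\dd$ tends to capture exactly the coordinates where $\x^{(p)}$ is far from $\x^{(r)}$, which are the error positions. Let $\ell = |K\setminus L|$; since $|L|=|K|$ we also have $|L\setminus K|=\ell$. The heart of the argument is an exchange inequality: every coordinate of $L\setminus K$ lies in $L$ while every coordinate of $K\setminus L$ lies outside $L$, so each of the $\ell$ magnitudes in $L\setminus K$ is at least each of the $\ell$ magnitudes in $K\setminus L$, which gives $\sum_{i\in L\setminus K}\x^{(p)}_i \geq \sum_{j\in K\setminus L}(1-\x^{(p)}_j)$.

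Finally I would assemble these facts. The left-hand side is at most $\sum_{i\in K^c}\x^{(p)}_i$, and the right-hand side is at least $\ell - \sum_{j\in K}\x^{(p)}_j$; adding the two partial tails over $K^c$ and over $K$ reconstitutes exactly $\|\x^{(p)}\|_1$, so the exchange inequality forces $\ell < \|\x^{(p)}\|_1 < \kappa p^*\epsilon n$. Then the error fraction over $L$ is $1 - \ell/|K| = 1 - \ell/(p^*(1+\epsilon)n) > 1 - \kappa\epsilon/(1+\epsilon) \geq 1 - \kappa\epsilon$, which is the claimed bound $1 - 2\frac{C+1}{C-1}\epsilon$. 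The step I expect to require the most care is this final combination: the exchange inequality and the two mass bounds must be glued together so that the $\ell_1$-mass of $\x^{(p)}$ is counted only once (as $m_K + m_{K^c}$ over $K$ and $K^c$), since a naive argument that bounds the mass on $K$ and on $K^c$ each by the full $\kappa p^*\epsilon n$ introduces a spurious factor of two and degrades the clean constant $\kappa$.
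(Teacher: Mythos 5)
Your proof is correct and follows essentially the same route as the paper's: both hinge on applying Lemma \ref{lem:LP robustness} with $S=K_1$ to bound $\|\x^{(p)}-\x^{(c)}\|_1$ by $2\frac{C+1}{C-1}\,p^*\epsilon n$, and then on a counting argument showing that $|K\setminus L|$ is at most this $\ell_1$ mass. The only difference is that the paper outsources the counting step to Lemma \ref{lem:deviation thm} (Lemma 1 of the reweighted-$\ell_1$ paper), whereas you re-derive it inline via the exchange inequality between $L\setminus K$ and $K\setminus L$ — a self-contained and equally valid substitute.
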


Before proving this theorem, we state the following definition and
lemma.
\begin{definition}
Let $\x\in\mathbb{R}^n$ be a $k$-sparse vector. For $\lambda > 0$, We
define $W(x,\lambda)$ to be the size of the largest subset of
nonzero entries of $\x$ that has a $\ell_1$ norm less than or equal
to $\lambda$, i.e., \beq W(\x,\lambda) := \max\{|S|~|~S\subseteq
supp(\x),~\|\x_S\|_1\leq \lambda\}.\eeq
\end{definition}

\noindent The following Lemma is proven in \cite{Iterative l1 ISIT}.
\begin{lem}[Lemma 1 of \cite{Iterative l1 ISIT}]
Let $\x$ be a $k$-sparse vector and $\hat{\x}$ be another vector.
Also, let $K$ be the support set of $\x$ and $L$ be the $k$-support
set of $\hat{\x}$, namely the set of $k$ largest entries of
$\hat{\x}$. If $d = \|\x-\hat{\x}\|_1$, then \beq |K\cap L|\geq
k-W(\x,d).  \eeq \label{lem:deviation thm}
\end{lem}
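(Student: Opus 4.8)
The plan is to recast the asserted lower bound on $|K\cap L|$ as an upper bound on the complementary piece $K\setminus L$, and then to squeeze that piece through the definition of $W$. Since $\x$ is $k$-sparse, its support has exactly $|K|=k$ elements, so $K\cap L$ and $K\setminus L$ partition $K$ and $|K\cap L| = k - |K\setminus L|$. Hence the claim $|K\cap L|\ge k - W(\x,d)$ is exactly equivalent to the single inequality $|K\setminus L|\le W(\x,d)$. Now $W(\x,d)$ is defined as the largest cardinality of a subset of $\text{supp}(\x)=K$ whose $\x$-restriction has $\ell_1$ norm at most $d$, so it suffices to show that $K\setminus L$ is one such subset. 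In other words, the whole problem reduces to proving the mass bound $\|\x_{K\setminus L}\|_1\le d=\|\x-\hat{\x}\|_1$; once this holds, $K\setminus L$ is admissible in the maximization defining $W(\x,d)$ and therefore $|K\setminus L|\le W(\x,d)$.

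To prove the mass bound I would write $A=K\setminus L$ and $B=L\setminus K$. Because $L$ and $K$ each have exactly $k$ elements, removing their common part gives $|A|=|B|$, so I may fix an arbitrary bijection $\pi\colon A\to B$. Two structural facts drive the estimate. First, every index of $B$ lies in the top-$k$ set $L$ while every index of $A$ lies outside $L$; since $L$ collects the $k$ largest entries of $\hat{\x}$ in magnitude, $|\hat{\x}_{\pi(i)}|\ge|\hat{\x}_i|$ for each $i\in A$. Second, $B\subseteq K^c$, so $\x$ vanishes on $B$ and $|\hat{\x}_j|=|\hat{\x}_j-\x_j|$ for every $j\in B$.

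Combining these, for each $i\in A$ the triangle inequality together with the two facts gives $|\x_i|\le|\x_i-\hat{\x}_i|+|\hat{\x}_i|\le|\x_i-\hat{\x}_i|+|\hat{\x}_{\pi(i)}|=|\x_i-\hat{\x}_i|+|\hat{\x}_{\pi(i)}-\x_{\pi(i)}|$. Summing over $i\in A$ and using that $\pi$ is a bijection onto $B$ converts the two sums into $\|(\x-\hat{\x})_A\|_1+\|(\x-\hat{\x})_B\|_1=\|(\x-\hat{\x})_{A\cup B}\|_1$, where I have used that $A$ and $B$ are disjoint. This last quantity is bounded by the full norm $\|\x-\hat{\x}\|_1=d$, which yields $\|\x_{K\setminus L}\|_1\le d$ and closes the argument.

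The only delicate point is the pairing step: I must guarantee that a bijection $\pi\colon A\to B$ exists (which is the identity $|A|=|B|$ coming from $|K|=|L|=k$) and that it can be used to charge the "lost" mass of $\x$ on $A$ against the perturbation $\x-\hat{\x}$ on $A\cup B$ via the dominance $|\hat{\x}_{\pi(i)}|\ge|\hat{\x}_i|$. Everything else is the triangle inequality and bookkeeping, so I would state these two ingredients explicitly at the outset, since they are precisely what makes the exchange estimate work.
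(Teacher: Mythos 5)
Your proof is correct: reducing the claim to the single bound $\|\x_{K\setminus L}\|_1\le d$, and establishing that bound via the pairing of $K\setminus L$ with $L\setminus K$ (using $|K|=|L|=k$, the top-$k$ dominance $|\hat{\x}_{\pi(i)}|\ge|\hat{\x}_i|$, and $\x=0$ on $L\setminus K$), is exactly the right mechanism, and every step checks out. Note that this paper does not actually prove the lemma --- it imports it verbatim from the cited reference \cite{Iterative l1 ISIT} --- so there is no in-paper proof to compare against; your argument is a valid self-contained proof and coincides with the standard exchange argument given in that reference, where the bijection is phrased equivalently as the aggregate comparison $\sum_{j\in L\setminus K}|\hat{\x}_j|\ge\sum_{i\in K\setminus L}|\hat{\x}_i|$.
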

%

\noindent\begin{proof}[Proof of Theorem \ref{thm:importance set}]

Define $k= p^*(1+\epsilon)n$, and apply Lemma \ref{lem:deviation
thm} to the $k$-sparse vector $\x^{(r)}-\x^{(c)}$, and the vector
$\x^{(p)}-\x^{(r)}$. If $L$ is the index set of the largest $k$
entries of $\x^{(p)}-\x^{(r)}$ in magnitude, then from Lemma
\ref{lem:deviation thm} we have

\beq \label{eq:KcapL}|K\cap L| \geq k - W(\x^{(r)}-\x^{(c)},\Delta),
\eeq

\noindent where $\Delta = \|\x^{(c)}-\x^{(p)}\|_1$. Since
$\|\x^{(r)}-\x^{(c)}\|$ has only $\pm1$ nonzero entries,
(\ref{eq:KcapL}) can be written as

\beq \label{eq:KcapL2}|K\cap L| \geq k -  \|\x^{(c)}-\x^{(p)}\|_1. \eeq

\noindent We use the inequality in (\ref{eq:robustness}) to further
lower bound the right hand side of (\ref{eq:KcapL2}). Recall that
$K_1\subset K$ is such that $\C$ has $\text{FCP}(K_1,C)$. Therefore,
we can write:

\bea \label{eq:KcapL3}|K\cap L| &\geq& k -2\frac{C+1}{C-1}
\|(\x^{(r)}-\x^{(c)})_{ K_1^c}\|_1    \\ &=& k - 2\frac{C+1}{C-1}
(k-p^*n). \eea

Dividing both sides by $|K| = k$, we conclude that at least a fraction
$1-2\frac{C+1}{C-1}\epsilon$ of the set $L$ are flipped bits.
\end{proof}

\section{Iterative Reweighted LP Algorithm and Improved Strong Threshold}
\label{sec:Algorithm}

First, we briefly define different recovery thresholds for LP decoding for more clarity of the statements that will follow. In general,
the actual weak and strong thresholds for a given classes of linear codes  might be unknown, and the existing threshold only provide lower bounds on these quantities. For expander codes
for instance, the size of the error set that can be recovered via LP
 can be lower bounded by the size of the set for which a dual witness
 exists~\cite{Feldman constant fraction of error,Alex Decoding}. Since a dual witness is only a sufficient condition for the
 success of LP decoding, the actual thresholds are generally expected to be higher. However, to date, the best achievable thresholds for LP
 decoding for expander codes are those given by the dual feasibility arguments. Therefore, we also consider thresholds associated with
 those limits, namely the ``provable'' thresholds. Specifically, we define the following four thresholds
 for LP decoding on a given code $\C$ that has regular variable and check degrees $d_v$ and $d_c$.

\begin{definition}[Recovery thresholds]
\label{def:thresholds}
Strong recovery threshold is denoted by
$p^*_s$, and is defined as the largest fraction such that
\emph{every} set of size $p^*_sn$ is recoverable via LP decoding.
Weak recovery thresholds is denoted by $p^*_w$, and it means that
\emph{almost all} sets of size $p^*_wn$ is recoverable via LP. We
define $p^*_{sd}$ to be the maximum provable strong threshold achieved by a dual feasible,
\cite{Feldman constant fraction of error}. Similarly, $p^*_{wd}$ is the provable weak
threshold, i.e. for almost all sets of size $p^*_{wd} n$, a dual feasible (\cite{Alex Decoding}) exist.
\end{definition}

As sketched in Theorem \ref{thm:importance set}, by examining the deviation of the LP optimal (pseudo-codeword) and
the received vector, it is possible to identify a high error rate (HER) subset of bits in which the fraction of bit flips is higher than the overall probability of error, or the fraction of errors in the complement of the HER set. One way this imbalancedness can be exploited is by using a weighted LP scheme. This is outlined in the following iterative algorithm.
\begin{alg}
\label{alg:3} \text{}
\begin{enumerate}
\item Run LP decoding. If the output is integral terminate,
otherwise proceed.
\item  Take the fractional pseudocodeword $\x^{(p)}$ from the LP
decoder, and construct the deviation vector $\x^{(d)} =
\x^{(r)}-\x^{(p)}$.
\item Sort the entries of
$\x^{(d)}$ in terms of absolute value, and denote by $L$  the index
set of its \textbf{smallest} $pn$ entries.
\item  solve the following weighted LP:
\beq   \min_{\x\in \mathcal{P}} \lambda_1\|(\x-\y)_L\|_1 +
\lambda_2\|(\x-\y)_{L^c}\|_1,\eeq 
\noindent where $\lambda_1$ and
$\lambda_2$, where $\lambda_1 < 0$ ad $\lambda_2 > 0$ are fixed
parameters.
\end{enumerate}
\end{alg}

Algorithm \ref{alg:3} is only twice as complex as LP decoding. We  prove in the following that
algorithm \ref{alg:3} has a strictly improved provable strong and weak
recovery thresholds than the dual feasibility thresholds
$p^*_{sd}$ and $p^*_{wd}$ (Recall the definitions of $p^*_{sd}$ and $p^*_{sd}$ from Definition
\ref{def:thresholds}). 
\begin{thm}
\label{thm:reweighted improves}
 For any code $\mathcal{C}$, there exist  $\epsilon_1 >0$,$\epsilon_2> 0$, $\lambda_1 < 0 $ and $\lambda_2 >0$ so that every error set of size
 $(1+\epsilon_1)p^*_{sd}$, and almost all error sets of size $(1+\epsilon_2)p^*_{wd}$ can be corrected by Algorithm \ref{alg:3}. 
\end{thm}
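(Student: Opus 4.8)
The plan is to reduce to the all-zero codeword (legitimate, since $\mathcal{P}$ looks the same from every codeword) and to exploit the fact that, on the cube $[0,1]^n\supseteq\mathcal{P}$, the reweighted objective of Algorithm \ref{alg:3} is actually \emph{linear}. Writing $\y=\x^{(r)}$ with $y_i\in\{0,1\}$ and $y_i=1\Leftrightarrow i\in K$ (the true error set), each term $|x_i-y_i|$ equals $x_i$ or $1-x_i$ on $[0,1]$, so for $\x\in\mathcal{P}$ the objective equals $g(0)+\sum_i\gamma'_ix_i$ with an effective likelihood $\gamma'_i=\lambda_{(i)}(1-2y_i)$, where $\lambda_{(i)}=\lambda_1$ if $i$ lies in the high-error-rate (large-deviation) set $L$ of Theorem \ref{thm:importance set} and $\lambda_{(i)}=\lambda_2$ otherwise. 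Hence Algorithm \ref{alg:3} returns $\x^{(c)}=0$ precisely when $\sum_i\gamma'_i\omega_i>0$ for every nonzero $\omega\in\mathcal{K}$, exactly the fundamental-cone inequality that $\text{FCP}$ certificates verify. First I would therefore compute $\gamma'$ explicitly on the four blocks $L\cap K$, $L\cap K^c$, $L^c\cap K$, $L^c\cap K^c$.

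For the strong case, fix the provable strong threshold $p^*_{sd}$; by the strong-robustness content of Theorem \ref{thm:strong robust}, $\text{FCP}(S,C)$ holds for a fixed $C>1$ for every $S$ of size $p^*_{sd}n$. Given an error set $K$ with $|K|=(1+\epsilon_1)p^*_{sd}n$, choose any $K_1\subset K$ of size $p^*_{sd}n$, so $\text{FCP}(K_1,C)$ holds, and apply Theorem \ref{thm:importance set} with $p^*=p^*_{sd}$, $\epsilon=\epsilon_1$: the set $L$ carries a fraction at least $1-2\frac{C+1}{C-1}\epsilon_1$ of genuine errors, so the ``misclassified'' set $E:=(L\cap K^c)\cup(L^c\cap K)$ has size at most $4\frac{C+1}{C-1}\epsilon_1(1+\epsilon_1)p^*_{sd}n$. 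Choosing the weights of equal magnitude and opposite sign, $\lambda_2=-\lambda_1>0$, makes every coordinate of $\gamma'$ have magnitude $\kappa:=|\lambda_1|$, positive exactly on $E^c$ and negative exactly on $E$, so that $\sum_i\gamma'_i\omega_i=\kappa\!\left(\|\omega_{E^c}\|_1-\|\omega_{E}\|_1\right)$. Taking $\epsilon_1$ small enough that $|E|<p^*_{sd}n$, the set $E$ is contained in a set of size $p^*_{sd}n$ satisfying $\text{FCP}(\cdot,C)$, and since FCP passes to subsets ($\|\omega_{E}\|_1\le\|\omega_{E'}\|_1$ and $\|\omega_{E^c}\|_1\ge\|\omega_{E'^c}\|_1$ for $E\subseteq E'$) we get $\text{FCP}(E,C)$ with $C>1$. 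Thus $\|\omega_{E}\|_1<\|\omega_{E^c}\|_1$ and $\sum_i\gamma'_i\omega_i>0$ for all nonzero $\omega\in\mathcal{K}$, forcing the reweighted LP to output $\x^{(c)}$ and establishing recovery of every error set of size $(1+\epsilon_1)p^*_{sd}n$.

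The weak case is structurally identical. Now $K$ is a random set of size $(1+\epsilon_2)p^*_{wd}n$; picking $K_1\subset K$ of size $p^*_{wd}n$, the weak-robustness analysis (Lemma \ref{lem:matching} together with the $(p,q)$-matching bounds, or Theorem \ref{thm:ADS robust 3} in the girth regime) gives $\text{FCP}(K_1,C)$ for almost all such $K$, which is all that Theorem \ref{thm:importance set} requires to localize the errors in $L$ and to bound $|E|$. The small, LP-output-dependent set $E$ is then handled as before: for $\epsilon_2$ small one has $|E|<p^*_{sd}n$, and whenever $p^*_{sd}>0$ every set that small enjoys the deterministic (strong) $\text{FCP}(E,C)$ with $C>1$, so the same cone inequality $\sum_i\gamma'_i\omega_i>0$ holds and recovery succeeds with high probability. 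Choosing $\epsilon_1,\epsilon_2>0$ small and $\lambda_2=-\lambda_1$ as above supplies the parameters claimed, and since $\epsilon_1,\epsilon_2>0$ the thresholds $(1+\epsilon_1)p^*_{sd}$ and $(1+\epsilon_2)p^*_{wd}$ strictly exceed $p^*_{sd}$ and $p^*_{wd}$.

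The main obstacle is the bookkeeping around the data-dependent set $E$, not any isolated inequality. Two points need care: (i) that $\gamma'$ is negative only on the small block $E$, which hinges on the equal-magnitude choice $\lambda_2=-\lambda_1$ — without it one would instead require $\text{FCP}(E,C)$ with $C>\max(|\lambda_1|,\lambda_2)/\min(|\lambda_1|,\lambda_2)$ and a more delicate balance; and (ii) in the weak regime, that FCP be available for the particular $E$ produced by the first LP, rather than merely for almost all sets of that cardinality. I would resolve (ii) by noting that $E$ is a \emph{small} set and invoking the strong FCP enjoyed by all sufficiently small sets, so the only ``almost all'' hypothesis enters through $\text{FCP}(K_1,C)$ at the threshold scale; the delicate verification is that this small-set FCP is genuinely available for the code class at hand (automatic once $p^*_{sd}>0$).
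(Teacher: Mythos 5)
Your proposal is correct and follows essentially the same route as the paper: both identify the misclassified set $E=(L\cap K^{c})\cup(L^{c}\cap K)$, bound its size via Theorem \ref{thm:importance set} applied to a size-$p^*_{sd}n$ (resp.\ $p^*_{wd}n$) subset $K_1$ with $\text{FCP}(K_1,C)$, and reduce success of the reweighted LP with $\lambda_2=-\lambda_1$ to ordinary LP success on the effective error set $E$, which holds once $|E|\le p^*_{sd}n$. The only difference is cosmetic --- you certify the final step through the fundamental cone inequality in the primal, whereas the paper's Lemma \ref{thm:weighted LP } builds a dual witness (the two are linked by Lemma \ref{lem:feasible-->LP}) --- and your explicit treatment of the data-dependent set $E$ in the weak case, noting that it must be absorbed by the \emph{strong} threshold so that $p^*_{sd}>0$ is needed, is in fact more careful than the paper's.
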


we start with the following lemma
\begin{lem}
\label{thm:weighted LP }
Suppose a codewords $\x$ transmitted is  through a binary channel. Also suppose
that the bits of $\x$ can be divided into two sets $L$ and $L^c$, so
that at least a fraction $p_1$ of the bits in $L$ are flipped, and
at most a fraction $p_2$ of the bits in $L^c$ are flipped. Then the following
weighted LP decoding
\beq \label{eq:weighted LP} \min_{\x\in \mathcal{P}}
-\|(\x-\y)_L\|_1 + \|(\x-\y)_{L^c}\|_1,\eeq
\noindent can recover $\x$,
provided that \beq \label{eq:p1 p2 relation} (1-p_1)|L| + p_2|L^c|
\leq p^*_{sd}.\eeq
\end{lem}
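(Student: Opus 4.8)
The plan is to reduce the signed, weighted program (\ref{eq:weighted LP}) to an ordinary BSC-LP decoding problem with a modified $\pm 1$ cost vector, and then invoke the definition of the provable strong threshold $p^*_{sd}$. As in all the preceding arguments, I would first assume without loss of generality that the transmitted codeword is the all-zero codeword, so that $\y = \x^{(r)}$ has entries in $\{0,1\}$, with $y_i = 1$ precisely on the flipped bits.

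First I would rewrite each term $|x_i - y_i|$ over the feasible range $x_i \in [0,1]$, $y_i \in \{0,1\}$: when $y_i = 0$ it equals $x_i$, and when $y_i = 1$ it equals $1 - x_i$. Substituting this into the four cases (in or out of $L$, flipped or unflipped) and discarding the additive constants, the objective $-\|(\x-\y)_L\|_1 + \|(\x-\y)_{L^c}\|_1$ becomes, up to a constant, $(\gamma')^T\x$, where $\gamma'$ is the $\pm 1$ vector given by $\gamma'_i = -1$ for $i$ in the \emph{effective error set}
\[
S' := \{i \in L : y_i = 0\} \cup \{i \in L^c : y_i = 1\},
\]
and $\gamma'_i = +1$ otherwise. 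In other words, (\ref{eq:weighted LP}) is exactly an unweighted BSC-LP decoder $\min_{\x \in \mathcal{P}} (\gamma')^T \x$ whose apparent error pattern sits on $S'$: on $L$ the negative weight rewards moving away from $\y$, so the bits that are \emph{not} flipped behave as if erroneous, while on $L^c$ the ordinary weight keeps the genuinely flipped bits as the errors.

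Next I would observe that recovering the transmitted (all-zero) codeword from this effective LP is guaranteed as soon as $|S'|$ does not exceed the provable strong threshold: by the definition of $p^*_{sd}$ (Definition \ref{def:thresholds}) and the dual-feasibility construction of \cite{Feldman constant fraction of error}, for \emph{every} error set of size at most $p^*_{sd}$ there is a dual witness certifying that the all-zero codeword is the LP optimum. It therefore suffices to bound $|S'|$. Since at least a fraction $p_1$ of the bits of $L$ are flipped, at most $(1-p_1)|L|$ of them are unflipped; since at most a fraction $p_2$ of the bits of $L^c$ are flipped, at most $p_2|L^c|$ of them are flipped. Hence
\[
|S'| = |\{i \in L : y_i = 0\}| + |\{i \in L^c : y_i = 1\}| \leq (1-p_1)|L| + p_2|L^c| \leq p^*_{sd},
\]
where the last inequality is the hypothesis (\ref{eq:p1 p2 relation}). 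This places $S'$ within the provable strong threshold and completes the argument.

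The only genuinely substantive step is the reduction itself: one must check that the sign change on $L$ does not merely reweight but actually \emph{relabels} which bits play the role of errors, and that the additive constants discarded along the way are independent of $\x$ (they are, since each flipped coordinate contributes a fixed $\pm 1$). I would also stress why the \emph{strong} threshold $p^*_{sd}$, rather than the weak one, is the right quantity: the set $S'$ is dictated by the particular flip pattern inside $L$ and $L^c$ and cannot be treated as uniformly random, so we need a certificate that holds for every set of its size.
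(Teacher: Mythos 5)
Your proposal is correct and follows essentially the same route as the paper: both arguments reduce the weighted LP to an ordinary BSC-LP decoding problem whose effective error set is $S_1=(L\cap S^c)\cup(L^c\cap S)$ (your $S'$), bound $|S_1|\leq (1-p_1)|L|+p_2|L^c|$, and then invoke the dual-witness guarantee implicit in the definition of $p^*_{sd}$. The only cosmetic difference is that the paper states the reduction directly in terms of the dual feasibility conditions on the $\tau_{ij}$, whereas you first rewrite the primal objective as $(\gamma')^T\x$ and then appeal to the same dual certificate.
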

\begin{proof}
We assume without loss of generality that the all zero codeword has
been transmitted and prove that there exists a feasible dual
(Definition \ref{def:Dual} ) for the LP decoder~\ref{eq:weighted LP}. The
feasible dual must satisfy condition (i) of Definition \ref{def:Dual} for all check nodes, and in
addition: \beq\sum_{j\in N(i)}\tau_{ij} \leq \left\{\begin{array}{l}
~~1~~i\in L\cap S \\ -1~~i\in L\cap S^c \\ -1~~i\in L^c\cap S \\
~~1~~i\in L^c\cap S^c
\end{array}\right..
\label{eq:sum_tau}
\eeq
One can note that the conditions of (\ref{eq:sum_tau}) are equivalent to  $\tau_{ij}$'s being a
feasible dual set for ordinary LP decoder when the error set is $S_1
= (L\cap S^c )\cup(L^c\cap S)$. Therefore if the size of $S_1$ is
smaller than $p^*_{sd} n$, from the definition of $p^*_{sd}$, such a
feasible dual set exists. This completes the proof the theorem.
\end{proof}

\begin{proof}[proof of Theorem \ref{thm:reweighted improves}]
We set $\lambda_1 = -1$ and $\lambda_2 = 1$. Suppose the all zero
codeword have been transmitted without loss of generality, and the
received binary vector $x^{(r)}$ has $pn$ errors, where $p =
(1+\epsilon_0)p^*_{sd}$. From Theorem \ref{thm:strong robust}, $\C$
has $\text{FCP}(p^*_sn,C)$ for some $C>1$. Therefore, if we apply
Theorem \ref{thm:importance set} to the output of LP, namely
$\x^{(p)}$, we conclude that the set $L$ of most $pn$ deviated bits
in $\x^{(p)}$ with respect to $\x^{(r)}$, and the set $S$ of the
errors in $\x^{(r)}$, have at least a fraction
$1-2\frac{C+1}{C-1}\epsilon_1$ overlap. Define $p_1 =
\frac{|L\cap S|}{|L|}$ and $p_2 = \frac{|L^c\cap S|}{|L^c|}$. We
 must have 
\bea p_1 &\geq& 1-2\frac{C+1}{C-1}\epsilon_0, \\ p_1|L|
+ p_2|L^c| &=& p. \eea 

\noindent Therefore, as $\epsilon_0\rightarrow
0$, $p_1\rightarrow1$ and $p_2\rightarrow0$. So, for some small
enough $\epsilon_0$, the following will eventually hold

\beq (1-p_1)|L| + p_2|L^c| \leq p^*_{sd}.\eeq

\noindent Thus, according to Lemma \ref{thm:weighted LP }, the
weighted LP step of Algorithm \ref{alg:3} corrects all errors.
similarly, if  a random set of  $pn$ bits are flipped, when $p =
(1+\epsilon_2)p^*_{wd}$, from Lemma \ref{lem:matching} we conclude
that with high probability there exists a $C>1$ so that
$\text{FCP}(S_1,C)$ holds for a random subset $S_1$ of the bit errors of
size $p^*_{wd}n$. Therefore, using Theorem \ref{thm:importance set},
it follows that the set $L$ of most $pn$ deviated bits in $\x^{(p)}$
with respect to $\x^{(r)}$, and the set of errors in $\x^{(r)}$
have at least an overlap fraction of $1-2\frac{C+1}{C-1}\epsilon_2$. The
remainder of the proof is the same as the previous case, i.e. by applying
Lemma \ref{thm:weighted LP }.
\end{proof}

\section{Simulations}
\label{sec:simulations} We have implemented Algorithm \ref{alg:3}
on a random LPDC code of size $n=1000$ and rate $R=3/4$ and have
compared the results with other existing methods. The variable node
degree is $d_v = 3$, and thus, $d_c=4$.  The algorithm is compared
with the mixed integer method of Draper and Yedidia \cite{Mixed
Integer Yedidia}, and the random facet guessing algorithm of
\cite{facet guessing}. The mixed integer algorithm  re-runs
the LP decoding by setting integer constraints on a small subset of ``least
certain'' bits, namely the positions where the LP minimal
pseudocodeword entries are closest to $0.5$. We have  taken the size
of the constrained subset to be $M=5$, which means the number of
extra iterations is $32$ for the mixed integer method.  We also choose to run $20$ more extra
random iterations for facet guessing. In random facet guessing, a
face (facet) of the polytope $\mathcal{P}$ is selected at random, among all
the faces on which the LP minimal pseudocodeword does not reside. Then, LP
decoder is re-run with the additional constraint that the solution
is on the selected face. In contrast, Algorithm \ref{alg:3} has only
one extra iteration.  All methods are simulated in MATLAB where LP
decoder is implemented via the cvx toolbox \cite{cvx}. We have
plotted the BER curves versus the probability of error $p$ in Figure
\ref{fig:BER_LP_vs_WLP}. For Algorithm \ref{alg:3}, for each $p$, we
have experimentally found the optimal $\lambda_1$ and $\lambda_2$ by
choosing the values that on average result in the best performance.
For most of the cases the chosen values where in the ranges $-3\leq
\lambda_1 \leq -0.5$ and $1\leq \lambda_2\leq 3$. Observe the superior BER performance of Algorithm \ref{alg:3} which becomes more significant for smaller values of $p$. For $p=0.11$, the BER improvement in the reweighted LP method is at least one order of magnitude.  In our preliminary experimental evaluation we observe that the BER
curves eventually collapse into the same curve as the LP curve, except for the reweighted LP algorithm, which is an indication of the fact that the empirical thresholds of Algorithm \ref{alg:3} are better than those of LP decoder and existing polynomial time post processing methods.
\begin{figure*}[t]
\centering
  \includegraphics[width= 0.7\textwidth]{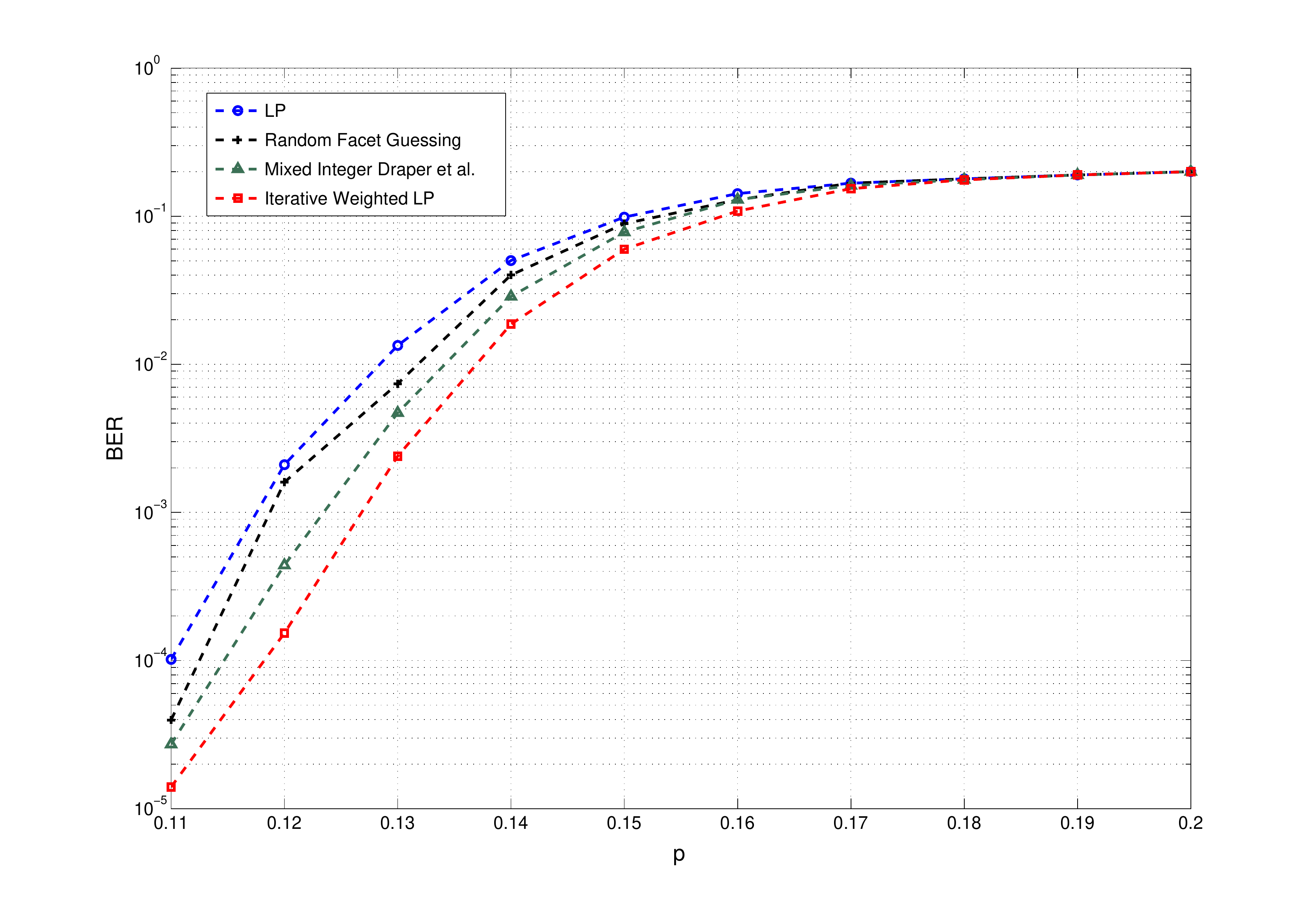}
  \caption{ \scriptsize{BER curves as a function of channel flip probability $p$, for LP decoding and different iterative
  schemes; random facet guessing of \cite{facet guessing}, mixed integer
  method of \cite{Mixed Integer Yedidia}, and  the suggested iterative reweighted LP of Algorithm \ref{alg:3}.
  The code is a random LDPC(3,4) of length $n=1000$.}}
  \label{fig:BER_LP_vs_WLP}
\end{figure*}

\appendix%
\section{  Proof of Lemma \ref{lem:feasible-->LP}  }\label{sec:proof of feasibility lem}
We first prove the following lemma.
\begin{lem}
\label{lem:wtau>0} Suppose $\{\tau_{ij}~|~1\leq i\leq n, 1\leq j\leq
m\}$ is a set of feasible dual variables on the edges of the factor
graph $\mathcal{G}$ of the code $\C$, for some arbitrary
log-likelihood vector $\gamma$. Then for every vector $\w \in
\mathcal{K}(\C)$ and every check node $c_j$, the following holds
\beq \sum_{v_i\in N(c_j)} w_i\tau_{ij} \geq 0. \eeq
\end{lem}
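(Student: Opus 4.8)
The plan is to prove the inequality one check node at a time, using only condition (i) of the feasibility definition (Definition \ref{def:Dual}) at that check, together with the defining inequalities of the fundamental cone $\mathcal{K}$. Fix a check node $c_j$ and write $N(c_j) = \{v_{i_1},\dots,v_{i_d}\}$, where $d$ is the degree of $c_j$. Abbreviate $\tau_k := \tau_{i_k j}$ and $w_k := w_{i_k}$, and relabel the neighbors so that $\tau_1 \leq \tau_2 \leq \cdots \leq \tau_d$. The goal is then $\sum_{k=1}^d w_k \tau_k \geq 0$, knowing three facts: every $w_k \geq 0$; each $w_k \leq \sum_{l \neq k} w_l$ (this is the cone inequality $\omega_{i} \leq \sum_{i'\in \mathcal{I}_j\setminus i}\omega_{i'}$ of $\mathcal{K}$, where $\mathcal{I}_j = N(c_j)$, applied at index $i_k$); and $\tau_k + \tau_l \geq 0$ for all $k \neq l$ (condition (i)).

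First I would dispense with the easy case: if $\tau_1 \geq 0$, then all the $\tau_k$ are nonnegative, and since every $w_k \geq 0$ the sum is trivially nonnegative. So assume $\tau_1 < 0$. Applying condition (i) to the two smallest duals gives $\tau_1 + \tau_2 \geq 0$, hence $\tau_k \geq \tau_2 \geq -\tau_1 > 0$ for every $k \geq 2$. In other words, at most one of the dual values at this check node is negative, namely $\tau_1$; this structural observation is what makes the sorting pay off.

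The main work is then to absorb the single negative contribution $w_1 \tau_1$ using the cone constraint at index $i_1$. Since $\tau_1 < 0$, multiplying the inequality $w_1 \leq \sum_{l\geq 2} w_l$ by $\tau_1$ reverses it to $w_1 \tau_1 \geq \tau_1 \sum_{l\geq 2} w_l$. Substituting this lower bound and regrouping yields
\beq \nonumber \sum_{k=1}^d w_k \tau_k \;\geq\; \sum_{l\geq 2} w_l(\tau_1 + \tau_l). \eeq
Each summand is a product of $w_l \geq 0$ with $\tau_1 + \tau_l \geq 0$ (again condition (i)), so the right-hand side is nonnegative, completing the argument for $c_j$; since $c_j$ was arbitrary, the lemma follows.

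The step I expect to require the most care is the sign/sorting analysis: one must argue cleanly that condition (i) forces at most one negative $\tau$ among the neighbors of $c_j$, so that exactly one cone inequality — the one associated with the pair $(c_j, v_{i_1})$ — suffices to dominate that single negative term. The remaining manipulation is routine, but it is worth stating explicitly which cone inequality is being invoked, since the whole estimate hinges on pairing the negative dual $\tau_1$ against the sum of the $w_l$ for $l \geq 2$ and then redistributing it across the pairwise constraints $\tau_1 + \tau_l \geq 0$.
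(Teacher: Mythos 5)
Your proof is correct and follows essentially the same route as the paper's: both arguments observe that condition (i) permits at most one negative dual value at each check node, invoke the single fundamental-cone inequality at that index to absorb the negative term, and then use the pairwise constraints $\tau_{ij}+\tau_{i'j}\geq 0$ to conclude. Your version is slightly more explicit about the sorting and the trivial all-nonnegative case, but the decomposition and the key estimate are identical.
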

\begin{proof}
We only use condition (i) of a feasible set of dual variables. Note
that among the variable nodes in $N(c_j)$, there can be at most one
node $v_i$ with $\tau_{i,j} <0$. Let $v_i$ be such a variable node.
From the definition of $\mathcal{K}$ we can write

\beq \nonumber w_i \leq \sum_{i'\in N(j)\setminus i} w_{i'}, \eeq

\noindent or equivalently:

\beq\label{eq:sum wtau >0} \tau_{ij}w_i + \sum_{v_{i'}\in
N(v_j)\setminus i}|\tau_{ij}| w_{i'} \geq 0. \eeq

\noindent Moreover, we know that $\tau_{ij}+\tau_{i'j} \geq 0$ for
$i'\neq i$, from the condition (i) of the dual feasibility.
Therefore, replacing $\tau_{i'j}$ with $|\tau_{ij}|$ for each
$i'\neq i$ does not decrease the left hand side of (\ref{eq:sum wtau
>0}), and thus \beq \nonumber \sum_{v_i\in N(c_j)} w_i\tau_{ij}
\geq 0. \eeq

\end{proof}

We now invoke Lemma \ref{lem:wtau>0} that for every check node
$c_j$, $\sum_{v_i\in N(c_j)} w_i\tau_{ij} \geq 0$. If we sum these
inequalities for all check nodes $c_j$ we obtain:

\beq \nonumber \sum_{c_j\in X_c}{\sum_{v_i\in N(c_j)} w_i\tau_{ij}}
= \sum_{v_i\in X_v}{w_i\sum_{c_j\in N(v_i)}\tau_{ij} } \geq 0, \eeq

\noindent When $X_v$ and $X_c$ are the sets of variable and check
nodes respectively. Since $\tau_{ij}$s are feasible dual variables,
from condition (ii) of feasibility (Definition \ref{def:Dual}), we
must have $\sum_{c_j\in N(v_i)}\tau_{ij}  < \gamma_i$. It then
follows that

\beq \nonumber \sum_{v_i\in X_v}{\gamma_i w_i } > 0.\eeq

\section{Proof of Theorem \ref{thm:strong robust} }\label{sec:proof of strong robust}
 We basically repeat the argument of \cite{Feldman constant fraction of
 error} with some slight adjustments. Let $S$ be the set of flipped
 bits, or interchangeably the set of corresponding variable nodes in the factor graph
 $\mathcal{G}$ (we use $v_i$ to refer to the variable node corresponding to the $i^\text{th}$
 bit).
\begin{definition}[$(\delta,\lambda)$ matching from \cite{Feldman constant fraction of error}]
\label{def:delta lambda}
 A $(\delta,\lambda)$ matching of the set $S$ is a set $M$ of edges of the
 factor graph $\mathcal{G}$, so that no two edges are connected to the same check
 node, every node in $S$ is connected to at least $\delta d_v$ edges
 of $M$, and every node in $S'$ is connected to at least $\lambda
 d_v$ edges of $M$. Here $S'$ is the set of variable nodes that are
 connected to at least $(1-\lambda)$ check nodes in $N(S)$.
\end{definition}

\noindent If there is a $(\delta,\lambda)$ matching on the set $S$,
then we consider the following labeling of the edges of
$\mathcal{G}$. For a check node $v_j$, if it is adjacent to an edge
$\tau_{ij}$ is $M$ then set $\tau_{ij}=-x$ and $\tau_{i'j}=x$ for
every other variable node $v_i'\in N(v_j)~i'\neq i$. Otherwise,
label all of the edges of the edges adjacent to $j$ by 0. It can be
seen that this for this labeling  $\{\tau_{ij}\}$ satisfies
condition (i) of dual feasibility (Definition \ref{def:Dual}), and
furthermore:  \beq\sum_{j\in N(i)}\tau_{ij} \leq
\left\{\begin{array}{l} (1-2\delta)d_v x~~i\in S \\ (1-\lambda)d_v
x~i\in S^c
\end{array}\right..\eeq

We know take $\lambda = 2-2\delta+1/d_v$. Let us  define a new
likelihood vector $\gamma'$ by

\beq \gamma' = \left\{\begin{array}{c} -C ~~ i\in S
\\ 1~~i\in S^c\end{array}\right.. \eeq

\noindent If a dual feasible set exists that satisfies the
feasibility condition for the vector $\gamma'$, then this implies
that the $\text{FCP}(S,C)$ holds. Now, since $C <
\frac{2\delta-1}{1-\lambda}$, if we choose $x$ to be 
\beq x =\frac{1}{(1-\lambda)d_v},\eeq 
\noindent then, it is clear that
$(1-2\delta)d_vx <-C$. So the dual feasibility condition is
satisfied, if we can construct the required $(\delta,\lambda)$
matching for $S$. From \cite{Feldman constant fraction of error}, if
$|S|\leq \frac{3\delta-2}{2\delta-1}\alpha$, and $\G$ is a bipartite
$(\alpha n,\delta d_v)$ expander, the desired matching exists. This
proves that $\text{FCP}(S,C)$ holds. Since this argument holds for
every set $S$ of size $t =\frac{3\delta-2}{2\delta-1}\alpha$, we
conclude that $\C$ has $\text{FCP}(t,C)$.
\section{Proof of lemma \ref{lem:matching}} \label{sec:proofs of lemma matching}

Consider a vector $\omega$ in the fundamental cone $\K = \K(H)$ of
the parity check matrix $H$. Without loss of generality, we may
assume that $S = \{1,2,\cdots,t\}$. For each $1\leq i\leq t$, let
the neighbors of the variable node $v_i$ in the $(p,q)$-matching on
$S$ be denoted by $c_1^i,c_2^i,\cdots,c_p^i$. The check nodes
$c_j^i$ are $p\times t$ distinct nodes. From the definition of $\K$,
if $\omega\in \K$, then for each $c_j^i$ we may write:

\begin{equation}
\omega_{i} \leq \sum_{l \in N(c_i^j)\setminus v_i} \omega_l, ~~
\forall 1\leq i \leq t~1\leq j\leq p. \label{eq: all Cone ineqs}
\end{equation}

\noindent We add all inequalities of (\ref{eq: all Cone ineqs}) for
  $1\leq i \leq t$ and  $1\leq j\leq p $. For $i \leq t$,
$\omega_i$ appears exactly $p$ times on the left hand side of the
sum and, at most $d_v - p$ times on the right. For $i > t$,
$\omega_i$ appears in at most $d_v-q$ inequalities and on the right
hand side. This comes directly from the definition of a
$(p,q)$-matching on the set $S$. Therefore
\begin{equation}
 p \sum_{i\in S} \omega_i \leq (d_v-p)\sum_{i\in S}\omega_i +
 (d_v-q)\sum_{i\in S^c}\omega_i,
\end{equation}
\noindent and thus,
\begin{equation}
 \frac{2p-d_v}{d_v-q} \sum_{i\in S} \omega_i \leq  \sum_{i\in S^c}\omega_i,
\end{equation}
\noindent which proves that $\C$ has the desired fundamental cone
property.

\section{Proof of Theorem \ref{thm:ADS robust 3}} \label{sec:proofs of thms}
%

We denote the set of variable nodes and check nodes by $X_v$ and
$X_c$ respectively. For a fixed $\w\in[0,1]^T$, let $\mathcal{B}$ be
the set of all minimal $T$-local deviations, and $\mathcal{B}_i$ be
the set of minimal $T$-local deviations that result from a skinny
tree rooted at the variable node $v_i$. Also, assume $S$ is the
random set of flipped bits, when the flip probability is $p$.
Interchangeably, we also use $S$ to refer to the set of variable
nodes corresponding to the flipped bits indices. We are interested
in the probability that for all $\beta^{(\w)}\in\mathcal{B}$,
$f_C^{(S)}(\beta^{(\w)})\geq 0$. Recall that

\beq \nonumber f_C^{(S)}(\x) := \sum_{i\in S^c} x_i -C\sum_{i\in S}
x_i. \eeq

For simplicity we denote this event by $f_C^{(S)}(\mathcal{B})\geq
0$. Since the bits are flipped independently and with the same
probability, we have the following union bound

\beq\label{eq:union bound} \Prob\left(f_C^{(S)}(\mathcal{B})\geq
0\right) \geq 1-n\Prob\left(f_C^{(S)}(\mathcal{B}_1)\geq 0\right).
\eeq

\noindent Now consider the full tree of height 2T, that is rooted at
the node $v_1$, and contains every node $u$  in $\G$ that is no more
than $2T$ distant from $v$, i.e. $d(v_1,u)\leq 2T$. We denote this
tree by $B(v_1,2T)$. To every variable node $u$ of  $B(v_1,2T)$, we
assign a label, $I(u)$, which is equal to $-C\omega_{h(u)}$ if $u\in
S$, and is $\omega_{h(u)}$ if $u\in S^c$, where
$(\omega_0,\omega_{2},\cdots,\omega_{2T-2})=\w$. We can now see that
the event $f_C^{(S)}(\mathcal{B}_1)\geq 0$ is equivalent to the
event that for all skinny subtrees $\mathcal{T}$ of $B(v_1,2T)$ of
height $2T$, the sum of the labels on the variable nodes of
$\mathcal{T}$ is positive. In other words, if $\Gamma_1$ is the set
of all skinny trees of height $2T$ that are rooted at $v_1$, then
$f_C^{(S)}(\mathcal{B}_1)\geq 0$ is equivalent to:

\beq \min_{\mathcal{T}\in\Gamma_1}\sum_{v\in \mathcal{T}\cap X_v}
I(v) \geq 0. \eeq

We assign to each node $u$ (either check or variable node) of
$B(v_1,2T)$ a random variable $Z_u$, which is equal to the
contribution to the quantity
$\min_{\mathcal{T}\in\Gamma_1}\sum_{v\in \mathcal{T}\cap X_v} I(v)$
by the offspring of the node $u$ in the tree $B(v_1,2T)$, and the
node $u$ itself. The value of $Z_u$ for can be determined
recursively from all of its children. Furthermore, the distribution
of $Z_u$ only depends on the height of $u$ in $B(v_1,2T)$.
Therefore, to find the distribution of $Z_u$, we use
$X_0,X_1,\cdots,X_{T-1}$ as random variables with the same
distribution as $Z_u$ when $u$ is a variable node ($X_0$ is assigned
to the lowest level variable node) and likewise
$Y_1,Y_2,\cdots,Y_{T-1}$ for the check nodes. It then follows that:

\bea \nonumber Y_0 &=& \omega_0\eta, \\
     \nonumber X_i &=& \min\{Y_i^{(1)},\dots,Y_i^{(d_c-1)}\}~~\forall i>0, \\
     \nonumber Y_i &=&
     \omega_i\eta+X_{i-1}^{(1)}+\cdots+X_{i-1}^{(d_v-1)}~~\forall i>0, \\
\eea \noindent where $X^{(j)}$s are independent copies of a random
variable $X$, and $\eta$ is a random variable that takes the value
$-C$ with probability $p$ and value $1$ with probability $1-p$. It
follows that

\bea\nonumber \Prob\left(f_C^{(S)}(\mathcal{B}_1)\leq 0\right) &=
&\Prob\left(
X^{(1)}_{T-1}+\cdots+X^{(d_v)}_{T-1} \leq 0 \right)\\
&\leq& (\stexp(e^{-tX_{T-1}}))^{d_v}.
\label{P(f_C^)}
\eea

\noindent The last inequality is by Markov inequality and is true
for all $t>0$. The rest of the proof we bring here is basically
appropriate modifications of the derivations of \cite{ADS} for the
Laplace transform evolution of the variables $X_i$s and $Y_i$s, to
account for a non-unitary  robustness factor $C$.  By upper bounding
the Laplace transform of the variables recursively it is possible to
show that (see Lemma 8 of \cite{ADS}, the argument is completely the
same for our case)

\begin{align}\label{eq:lemma8 of ADS} \nonumber \stexp{e^{-tX_i}}\leq
&   \left(\stexp{e^{-tX_j}}\right)^{(d_v-1)^{i-j}} \\ &\prod_{0\leq
k\leq
 i-j-1}\left((d_c-1)\stexp{e^{-t\omega_{i-k}\eta}}\right)^{(d_v-1)^k},\end{align}

\noindent for all $1\leq j\leq i< T$.

If we take the weight vector as $\omega =
(1,2,\cdots,2^j,\rho,\rho,\cdots,\rho)$ for some integer $1 \leq
j<T$, and use equation (\ref{eq:lemma8 of ADS}), we obtain:

\begin{eqnarray*}
&\stexp{e^{-tX_{T-1}}}  \leq &   (\stexp{e^{-tX_j}})^{(d_v-1)^{T-j-1}} \\
&\quad & \cdot \left((d_c-1)\stexp{e^{-t\rho\eta}}\right)^{\frac{(dv-1)^{T-j-1}-1}{d_v-2}}.
\end{eqnarray*}
\noindent $\rho$ and $t$ can be chosen to jointly minimize
$\stexp{e^{-tX_j}}$ and $\stexp{e^{-t\rho\eta}}$ in the above, which along with (\ref{P(f_C^)}) results in
\bea \nonumber \Prob(f_C^{\mathcal{S}}\left(\mathcal{B}_1\leq 0\right)) &\leq&
(\stexp{e^{-tX_{T-1}}})^{d_v}\\ \nonumber &\leq&
\gamma^{-d_v/(d_v-2)}\times c^{d_v (d_v-1)^{T-j-1}}, \eea
\noindent where $\gamma =
(d_c-1)\frac{C+1}{C}(1-p)(\frac{C.p}{1-p})^{1/(C+1)}$ and $c =\gamma^{1/(d_v-2)}\min_{t\geq
0}{\stexp{e^{-tX_j}}}$. If $c <1$, then probability of error tends to zero as stated in Theorem \ref{thm:ADS robust 3}.

\end{document}